\newtheorem{theorem}{Theorem}
\newtheorem{lemma}{Lemma}
\newtheorem{definition}{Definition}
\newcommand{\comments}[1]{}
\begin{document}
\preprint{APS/123-QED}
\title{Interaction-free measurement study as a quantum channel discrimination problem}

\date{\today}

\author{You Zhou}
\affiliation{Center for Quantum Information, Institute for Interdisciplinary Information Sciences, Tsinghua University, Beijing 100084, China}
\author{Man-Hong Yung}
\email{yung@sustc.edu.cn}
\affiliation{Institute for Quantum Science and Engineering and Department of Physics,
South University of Science and Technology of China, Shenzhen 518055, China}

\begin{abstract}
Interaction-free measurement (IFM), just as its name implies, can enable one to detect an object without interacting with it, i.e., substantially reducing the damage to the object. With the help of quantum channel theory , we investigate the general model of ``quantum-Zeno-like" IFM, whose optics setup is a Mach-Zehnder like interferometer utilizing the quantum Zeno effect, where the object to be detected is semitransparent and the interrogation cycle number $N$ is finite. And we define two important probabilities $P_{\rm loss}$ and $P_{\rm error}$ to evaluate the IFM process, which describe the photon loss rate and the error of discriminating the presenece/absence of the object respectively. The minimum values of these two probabilities and the corresponding initial input states to reach them are attained via this model. And we find that when the interrogation cycle $N$ approaches infinity, the object can be perfectly detected, where the minimum values of these two probabilities are both zero and the initial input state to reach them becomes the same state $|1\rangle$ in our system. In addition, we also study whether quantum correlation can benefit IFM or not, but the answer is no, in the sense that the entangled photon input state cannot minimize $P_{\rm loss}$,  $P_{\rm error}$ more than single photon input state. Our work provides principal theoretic support for the practical realization of IFM and the employed analysis technique can be applied to other quantum facilitating scenarios.
\end{abstract}

\maketitle

\section{introduction}
Quantum measurement is a fundamental concept in the quantum theory. The measurement process extracts the information stored in the quantum system to the classical world, where the quantum state is required to change adaptively based on the measurement outcome.
In fact, the measurement on the target system, say $A$, is accomplished indirectly by coupling it to another system $B$ and implementing measurement on that system alternatively.
However, even the nonobservance of a particular result of $B$ would modify the wave function of $A$, which is called the ``negative result measurement"~\cite{renninger1960messungen,dicke1981interaction}.

Following these former works,  Elitzur and Vaidman introduced a ``counterfactual" protocol dubbed interaction-free measurement (IFM) \cite{elitzur1993quantum}. In this IFM protocol, a photon is sent to a standard Mach-Zehnder interferometer to detect an opaque object, where the maximum efficiency for a successful detection without photon absorption is $50\%$ \cite{elitzur1993quantum,kwiat1995interaction}. However, by a modification on account of the quantum Zeno effect \cite{misra1977zeno}, the efficiency can approach $100\%$ as the interrogation cycle goes to infinity~\cite{kwiat1995interaction,kwiat1999high}.

Interaction-free measurement has been used to detect fragile objects, like single atom \cite{karlsson1998interaction,volz2011measurement} or photon-sensitive substances \cite{inoue2000experimental}. And the application to electron microscopy is also developed \cite{putnam2009noninvasive,thomas2014semitransparency,Kruit201631}, which should facilitate the biological molecules imaging.

Besides the original optical setup \cite{kwiat1995interaction,kwiat1999high}, there are many other different schemes proposed \cite{paraoanu2006interaction,putnam2009noninvasive,chirolli2010electronic,PhysRevB.93.115411}, or realized
\cite{HAFNER1997563,tsegaye1998efficient,ma2014chip} to achieve ``quantum-Zeno-like" IFM. However, the physical model behind them is essentially the same; they all involve utilizing the quantum Zeno effect to keep the photon state unchanged, in the presence of an object.

Here we consider an optical setup \cite{kwiat1999high} to illustrate the principle of IFM, as showed in Fig.~\ref{optical_setup}.  Let us denote respectively the state $|1\rangle$, $|2\rangle$, and $|3\rangle$ as the representation for,
\begin{equation}\label{state_label}
{\rm up}  \Leftrightarrow |1\rangle, \quad {\rm down}  \Leftrightarrow |2\rangle, \quad {\rm loss} \Leftrightarrow |3\rangle
\end{equation}
state of the incident photon. A light-absorbing object (e.g. a photon-sensitive bomb in \cite{elitzur1993quantum}) is placed in the path of the down state photon.  And the probability for this object to appear is denoted by $\Pr({\rm here})=q$.

In fact, in order to describe the incident photon state transformation when the object is present explicitly, we mimic the effect of this object with a mirror, followed by a photon detector \cite{kwiat1995interaction} (see Fig.~\ref{optical_setup} for detailed  illustration).

\subsection{Interaction-free measurement}
Let us consider the essential idea of interaction-free measurement: first, an incident photon is prepared in the up path, with a quantum state labeled by $\ket{1}$. Then, the incident photon is rotated by an angle~$\theta$ through a beamsplitter,
\begin{equation}\label{rotation_theta}
{R_\theta }\left| 1 \right\rangle  = \cos \theta \left| 1 \right\rangle  + \sin \theta \left| 2 \right\rangle,
\end{equation}
where $\theta=\pi/{2N}$. Here $N$ will be identified as the total number of interrogation cycle.

{\bf Presence of the object:}
If there exists an object along the down path, the photon in the down state $|2\rangle$ will be totally transferred to the loss state $|3\rangle$ by the mirror, i.e.,
\begin{equation}\label{perfect_abs}
{U_{{\text{I}}}}\left| 2 \right\rangle  = \left| 3 \right\rangle,
\end{equation}
where the subscript of $U_{\text{I}}$ stands for interaction. Furthermore, when applying it to a quantum superposition, we have
\begin{equation}\label{eq_qusuper_13}
{U_{{\text{I}}}}(\cos \theta \left| 1 \right\rangle  + \sin \theta \left|2 \right\rangle ) = \cos \theta \left| 1 \right\rangle  + \sin \theta \left| 3 \right\rangle.
\end{equation}

Then followed by the projective measurement $M$ by the photon detector,
\begin{equation}
\begin{aligned}
&P_0=\ket{1}\bra{1}+\ket{2}\bra{2},\\
&P_1=\ket{3}\bra{3},\\
\end{aligned}
\end{equation}
the final state in Eq.~(\ref{eq_qusuper_13}) becomes a mixed state,
\begin{equation}
\cos^2 \theta \ket{1}\bra{1} + \sin^2 \theta \ket{3}\bra{3} ,
\end{equation}
where the probability of the photon traveling along up path without absorption is given by $\Pr(\ket{1})=\cos^2\theta$. And the probability of which the photon is transformed to the loss state $\ket{3}$ and absorbed by the detector is $\Pr(\ket{3})=\sin^2\theta$.

In the probability subspace of $P_1$, the loss state photon will not participate in the following interrogation cycle, i.e., the IFM process halts in this case.
Consequently, the probability of finding $|1\rangle$ after $N$ cycles equals to $\Pr(\ket{1})=\cos^{2N}(\theta)$.  When $N$ approaches infinity, we have
\begin{equation}
\lim_{N\rightarrow\infty}\cos^{2N}(\theta)=1.
\end{equation}
Therefore, one can find the final state to be $|1\rangle$ with probability $1$ without any photon loss, in the presence of an object.

{\bf Absence of object:}
If there is no object, i.e., the down state $\ket{2}$ will travel straight through without getting absorbed (or reflected by the mirror); the rotation $R_\theta$ is directly applied $N$ times. Thus, the input photon state $|1\rangle$ can be rotated to $|2\rangle$ at the end, that is,
\begin{equation}
{({R_\theta })^N}\left| 1 \right\rangle  = {R_{N\theta }}\left| 1 \right\rangle  = \left| 2 \right\rangle.
\end{equation}

In summary, after $N$ cycles, if we get $\ket{1}$, it implies the existence of the object, while $\ket{2}$ implies the absence; we can therefore unambiguously detect the presence of an object (because state $|1\rangle$ and $|2\rangle$ are orthogonal), without any photon absorption by the object. This is the essential idea of the interaction-free measurement, based on the physics of quantum Zeno effect.

\begin{figure}[t]
\centering
\resizebox{6.5cm}{!}{\includegraphics[scale=0.8]{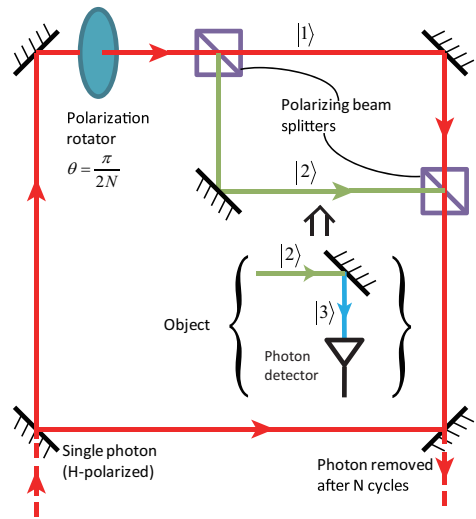}}
\caption{The ``quantum-Zeno-like" IFM setup. We illustrate the principle of IFM using the optical scheme in ref \cite{kwiat1999high}. The polarization rotator can rotate the photon polarization by $\theta=\frac{\pi}{2N}$ in each cycle. And the polarizing beam splitter can separate the photon to up or down path if the photon is horizontal polarized $H$ or vertical polarized $V$. So the polarizations of the photon label the up state $|1\rangle$ and the down state $|2\rangle$ in $Eq.~(\ref{state_label})$ respectively. In addition, the object is mimicked by a mirror followed by a photon detector. The mirror can transform the down state $\ket{2}$ to the loss state $\ket{3}$. And the photon detector implements the projective measurement on the two $\{\ket{1},\ket{2}\}$ and $\{\ket{3}\}$ subspaces. After $N$ interrogation cycles, we can judge whether there is an object in the down path with the final polarization state of the photon, without any photon absorption.}\label{optical_setup}
\end{figure}

\subsection{Finite rounds and imperfect absorption}

In practice, there are two problems one should consider, in implementing the interaction-free measurement. First, the number of interrogation cycle $N$ has to be finite; it is also impossible to make the rotation angle arbitrarily small.

Second, the absorption of photon by object may not be perfect, as assumed in Eq.~\eqref{perfect_abs}. In general, we should consider the absorption probability to be less than unity, i.e.,
\begin{equation}
{U_{{\text{I}}}} \left| 2 \right\rangle  =  a\left| 2 \right\rangle  + \sqrt {1 - {a^2}} \left| 3 \right\rangle ,
\end{equation}
where $a^2$ characterizes the transparency of the object. Here $a$ is assumed to be an non-negative real number for simplicity.

In this scenario, we can substitute a beam splitter, whose transparency is  $a^2$, for the mirror in Fig.~\ref{optical_setup} to mimic the corresponding semitransparent object. This treatment is similar to \cite{garcia2005quantum}, and other works \cite{thomas2014semitransparency,mitchison2001absorption} gave different but equivalent treatments.

\subsection{Related works}
Previous work has shown that the successful rate of IFM decreases if the object is semitransparent, compared with the opaque case~\cite{jang1999optical,vaidman2003meaning,garcia2005quantum}. The performance can be improved by increasing the interrogation cycle number $N$ and the object can also be detected perfectly without any photon absorption when $N\rightarrow\infty$ \cite{kwiat1998experimental,azuma2006interaction}.

In the literature ~\cite{garcia2005quantum,kwiat1998experimental,jang1999optical,azuma2006interaction}, the initial input state is usually taken as a pure state, namely $\ket{1}$. In the presence of an object, the successful probability
\begin{equation}
P_{\rm success}=|\bra{1}\hat{O}_{\rm IFM}\ket{1}|^2 \ ,
\end{equation}
is used to characterize the performance of the IFM process. Here $\hat{O}_{IFM}$ is a linear operator, but not necessarily unitary due to the possibility of photon loss. The value of $P_{suc}$ specifies the probability that one can receive a $\ket{1}$ photon after sending a $\ket{1}$ photon at the beginning, in the presence of an object. In this case, one can confirm the presence of an object without photon being absorbed.

To the best of our knowledge, there is no work aiming to optimize the IFM process through a search of optimal input states of the photon. In particular, the possibility of using quantum correlation to enhance the ability of channel discrimination have been achieved in the context of quantum illumination \cite{lloyd2008enhanced}, and here we also study the possibility of this kind of enhancement in IFM process.

\subsection{Main results}
In this work, we provide analytic solutions to a generalized IFM model. To be specific, we focus on two main quantities to benchmark the performance of IFM, namely (i) the loss probability $P_{\rm loss}$ and (ii) the error probability $P_{\rm error}$,  which respectively describe the photon loss rate and the minimum error of discriminating the object. Specifically, the minimum values of these two probabilities are investigated analytically, for any given values of the object transparency~$a^2$ and the interrogation number $N$.

Our main results are summarized as follows:
\begin{itemize}

\item[ \ ] {\bf For unentangled input states}:

\item There exists a unique quantum state $\ket{\varphi_0}$ minimizing $P_{\rm loss}$, for any finite $N$, which approaches~$0$ asymptotically as $N\rightarrow\infty$.

\item There are two states $\ket{\varphi_\pm}$ that leads to $P_{\rm error}=0$, i.e. perfect discrimination, as long as the following inequality is fulfilled,
\begin{equation}
\frac{1+a}{1-a} \ \sin(\frac{\pi}{2N})\leq1.
\end{equation}

\item The photon loss rate of $\ket{\varphi_+}$ is smaller than that of $\ket{\varphi_-}$, i.e., $(P_{\rm loss})_{\ket{\varphi_+}}<(P_{\rm loss})_{\ket{\varphi_-}}$, which means $\ket{\varphi_+}$ is better than $\ket{\varphi_-}$ in term of $P_{\rm loss}$.

\item For $N\rightarrow\infty$, both $\ket{\varphi_0}$ and $\ket{\varphi_+}$ approach the same state $\ket{1}$, where both $(P_{\rm loss})_{\ket{\varphi_0}}$, $(P_{\rm loss})_{\ket{\varphi_+}}$ share similar  asymptotic behavior $O(1/N)$.

\end{itemize}

In addition, we studied how quantum correlation of input states can facilitate the IFM process by utilizing entangled photons in the setting of quantum illumination \cite{lloyd2008enhanced}: send one photon in an entangled pair to the IFM cycle but keep the other photon. At the end, a joint POVM measurement is performed on both photons. We found that
\begin{itemize}
\item [ \ ] {\bf For entangled input states:}

\item The optimal state to reach the minimal $P_{\rm loss}$ is the product state $\ket{\varphi_0}|\phi_0\rangle$, where
$|\phi_0\rangle$ is any state of the second photon.

\item The two solutions $\ket{\varphi_\pm}$ expand to a family of quantum states in the larger Hilbert space. Specifically, all members of the form,
\begin{equation}
\alpha\ket{\varphi_+}|\phi_1\rangle+\beta\ket{\varphi_-}|\phi_1^\perp\rangle,
\end{equation}
can be employed to achieve $P_{\rm error}=0$, where $|\phi_1\rangle, |\phi_1^\perp\rangle$ are any two orthogonal states of the second photon.
However, the one with the minimal $P_{\rm loss}$ in this family of states is the unentangled state $\ket{\varphi_+}|\phi_1\rangle$.
\end{itemize}

In other words, entangled photons cannot minimize $P_{\rm loss}$,  or $P_{\rm error}$ better than the case with unentangled photons. Therefore, we conclude that entanglement cannot improve the IFM process.

The rest of the paper is organized as follows. In Section \ref{sec2}, we construct a general model with the use of quantum channel. In Section \ref{sec3}, we simplify the quantum channels for pure input state. In Section \ref{sec4}, \ref{sec5}, we study the case with opaque object and semitransparent object respectively. We conclude in Section \ref{sec6}.

\section{the general model}\label{sec2}
In this section, we present a general model of interaction-free measurement, taking into account of a semitransparent object and a finite number of interrogation cycle. In addition, we shall consider sending entangled photons as the input state as well.

First of all, the IFM process can be described as a quantum channel, which is sequentially-applied $N$ times on the input photon state, depending on the presence/absence of the object.  Thus, detecting the object is equivalent to a channel discrimination problem.

In both cases, a unitary rotation operator (see Eq.~(\ref{rotation_theta}))
\begin{equation}\label{rota_3state}
  R_\theta=\left(
  \begin{array}{ccc}
    \cos\theta & -\sin\theta & 0\\
    \sin\theta & \cos\theta & 0\\
    0 &  0 & 1
  \end{array}
\right)
\end{equation}
is applied for each step at first, where the matrix is wriiten in the $\ket{1}, \ket{2}$ and $\ket{3}$ basis. It can be regarded as the following channel,
\begin{equation}
\mathcal{E}_\theta(\rho)=R_\theta \, \rho \, R_\theta^\dag,
\end{equation}
where $\rho$ is the density matrix of the input state.

If a generic semitransparent object is present, the partial absorption effect can be represented by an effective quantum channel $\mathcal{E}_I$ ($I$ is short for interaction) on the photon state (see Appendix.~\ref{channel_derivation} for detailed derivation):
\begin{equation}\label{}
 \mathcal{E}_I(\rho)=\sum_{i=0,1} A_i\rho A_i^\dag,
 \end{equation}
\begin{equation}\label{channel_123}
\begin{aligned}
  A_0=&|1\rangle \langle1|+a|2\rangle \langle2|+|3\rangle \langle 3|, \\
  A_1=&\sqrt{1-a^2}|3\rangle \langle2|,
\end{aligned}
\end{equation}
where $A_0$ and $A_1$ are the Kraus operators fulfilling $\sum_{i=0,1}A_i^\dag A_i=I$.  $A_0$ describes the process that the down state $\ket{2}$ component partially decays to the loss state $\ket{3}$ component; $A_1$ is for the increase of the population on loss state, which indicates the photon loss probability.

It is necessary to clarify that $\mathcal{E}_I$ is not just a mere combination of the unitary $U_I$ and the projective measurement $M$, since the population on loss state $\ket{3}$ component will be absorbed by the detector and not participate in the following cycle, indicating that the photon loss is an irreversible process.

Then the channels that describe the whole interrogation can be written down by cycling the above channels for $N$ times as below
\begin{equation}
 \rho'=[\mathcal{E}_\text{I} \ \mathcal{E}_\theta]^N(\rho)=\mathcal{E}'(\rho),
 \end{equation}
\begin{equation}
 \rho''=[\mathcal{E}_\theta]^N(\rho)=\mathcal{E}''(\rho),
 \end{equation}
where $\rho'$ is the output density matrix, if the object is present; $\rho''$ is the output density matrix for the object absence case. The corresponding overall quantum channels are denoted by $\mathcal{E}'$ and $\mathcal{E}''$ respectively.

Our main concerns in IFM are two probabilities: one is the photon loss probability $P_{\rm loss}$, which describes the damage to the object. This concern is important if the detected object is fragile, like electronic devices or biological matters.  In fact, it is just the population accumulating on the loss state $\ket{3}$ component after the full IFM process,

\begin{equation}\label{def_Ploss}
  P_{\rm loss}=q  \ \langle 3|\mathcal{E}'(\rho)|3 \rangle,
 \end{equation}
where $q$ is the probability for the presence of the object.

\begin{figure}[t]
\centering
\resizebox{8cm}{!}{\includegraphics[scale=0.8]{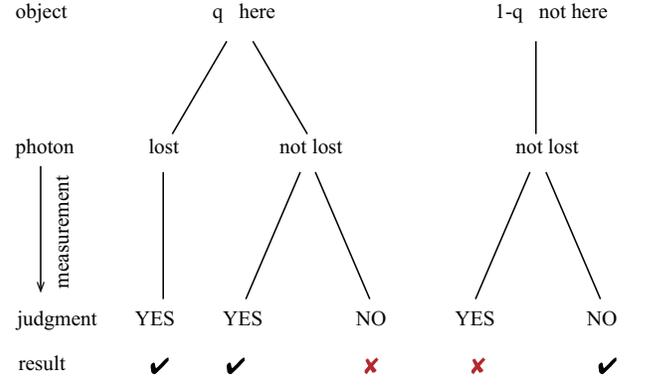}}
\caption{Illustration for the error happening in IFM. At the end of IFM, we implement POVM on the final output photon state, and we will make error when giving the judgment NO (YES) in the presence (absence) of the object.
The final line shows the result (right or wrong) of the judgment. And we can find that if the photon is lost, one can definitely confirm the presence of the object since the photon should not be lost in the absence of the object. Hence we can give the right judgment YES and make no error in this case.}\label{fig_yesno}\end{figure}

The other is the probability for making a error in the detection. Given the object existing probability $\Pr(\rm here)$, the error happens when one give the wrong judgement after the interrogation cycles (see Fig.~\ref{fig_yesno}), that is,
\begin{equation}\label{error}
\begin{aligned}
  P_{\rm error}=&\Pr(\rm here)\Pr(\rm NO|\rm here)\\
  &+\Pr(\rm not\ here)\Pr(\rm YES|\rm not\ here),
\end{aligned}
\end{equation}
where one gives the judgment NO in the presence of the object or YES in the absence of the object.

To be specific, one sends a input photon state $\rho$, and receives the output photon state $\mathcal{E}'(\rho)/\mathcal{E}''(\rho)$ depending on the presence/absence of the object. Then one makes the judgment by implementing a two-value POVM measurement $\{\Pi_1,\Pi_2\}$ on the final output photon. Here $ \Pi_1,\Pi_2$ are positive operators fulfilling $\Pi_1+\Pi_2=I$.  Specifically, if obtaining the measurement result $1(2)$, one makes the judgment that the object is here(not here). Hence, the corresponding conditional probabilities in the above equation become
$\Pr(\rm NO|\rm here)=\Tr[\Pi_2\mathcal{E}'(\rho)]$ and
$\Pr(\rm YES|\rm not\ here)=\Tr[\Pi_1\mathcal{E}''(\rho)]$. And substitute them into Eq.~\eqref{error}, we get
\begin{equation}\label{error}
\begin{aligned}
P_{\rm error}=q\Tr[\Pi_2\mathcal{E}'(\rho) ]+(1-q)\Tr[\Pi_1\mathcal{E}''(\rho)].
\end{aligned}
\end{equation}
where the definition $\Pr(\rm here)=q$ is used.

Following the minimum-error scheme \cite{helstrom1976quantum,herzog2004minimum} in quantum state discrimination, by choosing a proper POVM measurement, the reachable minimal error shows the following form,
\begin{equation}\label{def_Perror}
  P_{\rm error}=\frac{1}{2}[1-\|q\mathcal{E}'(\rho)-(1-q)\mathcal{E}''(\rho)\|],
 \end{equation}
where $||O||=Tr(\sqrt{O^\dag O})$ denotes the trace norm of any operator $O$. It indicates that the lager the trace norm distance between the two output state $\mathcal{E}'(\rho)/\mathcal{E}''(\rho)$ normalized by the corresponding probabilities $q/1-q$, the smaller the error is.

One may argue that in each cycle the photon detector may click (bomb exploding in \cite{elitzur1993quantum}), then the presence of the object can be confirmed at the middle of the whole process, thus it is not necessary to finish the following interrogation and discriminate the state at the end. However, in fact, they are equivalent; as will be showed explicitly in Eq.~\eqref{Perror_pure}, the loss probability of every cycle that accumulates on the loss state $\ket{3}$ component can be excluded from the $P_{\rm error}$, just because we can always make no error and confirm there is an object if the photon is lost (see Fig.~\ref{fig_yesno}).

The main focus of our IFM study is to find the minimums of these two probabilities $P_{\rm loss}$ and $P_{\rm error}$, and the initial input photon states to reach them. Fortunately, with the following theorem, we can reduce the range of the input state from any density matrix $\rho$, say mixed or pure, to just pure state $\ket{\varphi}$ in the Hilbert space of the photon.

\begin{theorem}\label{t1}
The minimums of the loss probability $P_{\rm loss}$ and the error probability $P_{\rm error}$ can be both reached by the pure state.
\end{theorem}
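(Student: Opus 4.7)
The plan is to exploit the linearity of the two overall quantum channels $\mathcal{E}'$ and $\mathcal{E}''$, combined with the fact that any density matrix admits a convex decomposition into pure states. Writing $\rho=\sum_i p_i |\varphi_i\rangle\langle\varphi_i|$ (for instance by spectral decomposition, though any convex decomposition will do), the argument then splits into the loss-probability case, which is affine in $\rho$, and the error-probability case, which is concave in $\rho$; in both settings the minimum over the convex set of states must be attained at the extreme points, namely the pure states.

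For $P_{\rm loss}$, I would simply substitute the convex decomposition into Eq.~(\ref{def_Ploss}) and use linearity of $\mathcal{E}'$ to obtain $P_{\rm loss}(\rho)=\sum_i p_i\,q\langle 3|\mathcal{E}'(|\varphi_i\rangle\langle\varphi_i|)|3\rangle=\sum_i p_i P_{\rm loss}(|\varphi_i\rangle\langle\varphi_i|)$. Since this is an affine function of $\rho$, at least one index $i_\star$ in the decomposition satisfies $P_{\rm loss}(|\varphi_{i_\star}\rangle\langle\varphi_{i_\star}|)\le P_{\rm loss}(\rho)$, so replacing $\rho$ by the corresponding pure state never increases the loss probability. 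Hence the minimum over all density matrices coincides with the minimum over pure states.

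For $P_{\rm error}$ the key input is convexity of the trace norm, $\|\sum_i p_i M_i\|\le \sum_i p_i \|M_i\|$, applied to $M_i:=q\mathcal{E}'(|\varphi_i\rangle\langle\varphi_i|)-(1-q)\mathcal{E}''(|\varphi_i\rangle\langle\varphi_i|)$. Inserting the decomposition into Eq.~(\ref{def_Perror}) and using this inequality yields $P_{\rm error}(\rho)\ge \sum_i p_i P_{\rm error}(|\varphi_i\rangle\langle\varphi_i|)\ge \min_{|\varphi\rangle}P_{\rm error}(|\varphi\rangle\langle\varphi|)$, so $P_{\rm error}$ is concave in $\rho$ and its infimum over density matrices is again attained at a pure state. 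The same reasoning carries over verbatim to the entangled-input scenario of Sec.~\ref{sec2}: one simply replaces $\mathcal{E}',\mathcal{E}''$ by their trivial extensions $\mathcal{E}'\otimes \mathcal{I},\,\mathcal{E}''\otimes\mathcal{I}$ on the two-photon Hilbert space and decomposes the joint density matrix into pure two-photon states; linearity of the channel and convexity of the trace norm are untouched by the enlargement of the system.

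The only delicate ingredient, and what I would call the main (mild) obstacle, is the appeal to convexity of the trace norm for the $P_{\rm error}$ half; once that inequality is in hand the proof is essentially a one-line substitution, and no structural property of the specific Kraus operators in Eq.~(\ref{channel_123}) or of the rotation $R_\theta$ is needed beyond complete positivity and trace preservation. In particular, the theorem is really a statement about discrimination of arbitrary linear CPTP maps, and the IFM structure plays no role in its proof.
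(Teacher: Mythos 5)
Your proposal is correct and follows essentially the same route as the paper's own proof: linearity of the channel makes $P_{\rm loss}$ affine in $\rho$, and convexity of the trace norm makes $P_{\rm error}$ concave, so both minima are attained on pure states of the (possibly enlarged) Hilbert space. The only difference is presentational — you make the extension to the bipartite case and the CPTP-generality explicit, which the paper handles with a brief remark after the theorem.
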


\begin{proof}
Due to the linearity of the quantum channel, we have
\begin{equation}\label{mix_pure_loss}
\begin{aligned}
 P_{\rm loss}&=q\langle 3|\mathcal{E}'(\rho)|3 \rangle,\\
&=q\langle 3|\mathcal{E}'(\sum_i p_i\varphi_i)|3 \rangle,\\
&=q\langle 3|\sum_i p_i\mathcal{E}'(\varphi_i)|3 \rangle,\\
&=\sum_i p_i q \langle 3|\mathcal{E}'(\varphi_i)|3 \rangle,\\
&=\sum_i p_i P_{\rm loss}^i,
\end{aligned}
\end{equation}
where $\varphi_i$ represents the density matrix of the pure state $|\varphi_i\rangle$, $P_{\rm loss}^i$ is the corresponding loss probability for it, and $\sum_i p_i\varphi_i$ is any convex decomposition of the input state $\rho$.

Eq.~(\ref{mix_pure_loss}) shows that the loss probability $P_{\rm loss}$ of the mixed state $\rho$ equals to the weighted average of $P_{\rm loss}^i$ of the corresponding pure state.  Thus there is at least one pure state $\varphi_i$ whose loss probability $P_{\rm loss}^i\leq P_{\rm loss}$.

Moreover, combining  the convex property of the trace norm, we also have
\begin{equation}
\begin{aligned}
  P_{\rm error}&=\frac{1}{2}[1-\|q\mathcal{E}'(\rho)-(1-q)\mathcal{E}''(\rho)\|],\\
 &=\frac{1}{2}[1-\|q\mathcal{E}'(\sum_i p_i\varphi_i)-(1-q)\mathcal{E}''(\sum_i p_i\varphi_i)\|],\\
 &=\frac{1}{2}\{1-\|\sum_i p_i [q\mathcal{E}'(\varphi_i)-(1-q)\mathcal{E}''(\varphi_i)]\|\},\\
 &\geq \sum_i p_i \{\frac{1}{2}[1-\|q\mathcal{E}'(\varphi_i)-(1-q)\mathcal{E}''(\varphi_i)\|]\},\\
 &=\sum_i  p_i P_{\rm error}^i.
\end{aligned}
\end{equation}
Still one can always find the specific pure state in the decomposition, whose $P_{\rm error}^i\leq P_{\rm error}$.
As a consequence, we can study the minimums of the two important probabilities just investigating the pure state in the Hilbert space.
\end{proof}
Quantum correlations \cite{modi2012classical} like entanglement, discord are essential resources for quantum communication and computation \cite{nielsen2010quantum}, and also for quantum metrology \cite{BRAUNSTEIN1996135,giovannetti2006quantum}. The efficiency of many tasks can be enhanced utilizing quantum correlation resources, e.g. quantum illumination \cite{lloyd2008enhanced} etc.  So here we also want to study the effect of quantum correlation to IFM and investigate whether quantum correlation can enhance the performance of IFM or not.

The new setup (Fig.~\ref{bipartite_input}), employing bipartite photon input state, is like this: The task of the photon $A$ is still to detect the object as in traditional IFM (Fig.~\ref{optical_setup}) and the photon $B$ remains unchanged in the whole process. However, there may be some quantum correlations between photon $A$ and photon $B$. Because quantum channels on photon $A$ are also quantum channels on combined system $A$ and $B$, $Th.~\ref{t1}$ can also be applied to this input case and we just need to consider the pure state in this barpartite scenario. That is to say, the quantum correlation between photon $A$ and $B$ is just entanglement. Therefore this input case is dubbed as the entangled photon input IFM.

\begin{figure}[tb]
\centering
\resizebox{8cm}{!}{\includegraphics[scale=0.8]{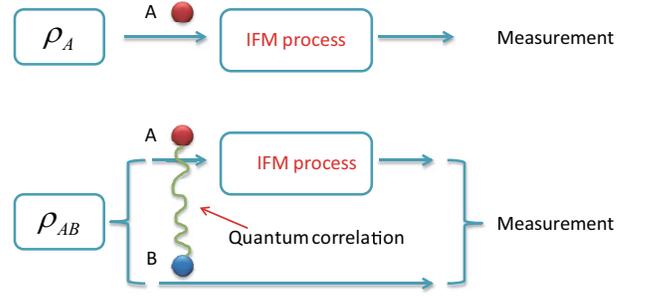}}
\caption{The single photon and entangled photon input IFM}\label{bipartite_input}
\end{figure}

Further more, we will give another theorem below, which describes the relation between the single photon input case and the bipartite photon input case about the two important probabilities $P_{\rm loss}$ and $P_{\rm error}$.
\begin{theorem}\label{t2}
Generally, bipartite photon input case is the same as single photon input case considering $P_{\rm loss}$, but not worse than single photon input case considering $P_{\rm error}$, that is,
\begin{equation}\label{Ploss_bi_single}
P_{\rm loss}(\rho_{AB})=P_{\rm loss}(\rho_{A})
\end{equation}
\begin{equation}\label{Perror_bi_single}
P_{\rm error}(\rho_{AB})\leq P_{\rm error}(\rho_{A})
\end{equation}
where $\rho_{AB}$ is the bipartite input state and $\rho_A=Tr_B(\rho_{AB})$.
\end{theorem}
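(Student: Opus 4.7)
The plan is to reduce both claims to two standard facts from quantum information theory: first, that the partial trace commutes with channels acting on a single subsystem; second, that the trace norm is monotonic (non-increasing) under completely positive trace-preserving maps, in particular under the partial trace itself. Since the IFM channels $\mathcal{E}'$ and $\mathcal{E}''$ constructed in Section~\ref{sec2} act on photon $A$ only, their extensions to the bipartite Hilbert space are $\mathcal{E}'\otimes I_B$ and $\mathcal{E}''\otimes I_B$, and this is the structural fact both arguments exploit.

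For the equality \eqref{Ploss_bi_single}, I would observe that the loss projector $\ket{3}\bra{3}$ is also supported on photon $A$ alone. Factoring the full trace as $\Tr=\Tr_A\Tr_B$ and using the commutation of the partial trace with a local channel gives
\[ P_{\rm loss}(\rho_{AB}) = q\,\Tr_A\!\left[\ket{3}\bra{3}\,\mathcal{E}'\bigl(\Tr_B\rho_{AB}\bigr)\right] = q\,\Tr_A\!\left[\ket{3}\bra{3}\,\mathcal{E}'(\rho_A)\right] = P_{\rm loss}(\rho_A), \]
which is exactly the definition \eqref{def_Ploss} applied to $\rho_A$.

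For the inequality \eqref{Perror_bi_single}, I would introduce the Helstrom-type operator $\sigma_{AB}:=q(\mathcal{E}'\otimes I_B)(\rho_{AB})-(1-q)(\mathcal{E}''\otimes I_B)(\rho_{AB})$. By the same commutation argument, $\Tr_B\sigma_{AB}=q\,\mathcal{E}'(\rho_A)-(1-q)\,\mathcal{E}''(\rho_A)=:\sigma_A$. Monotonicity of the trace norm under partial trace then yields $\|\sigma_{AB}\|\geq\|\sigma_A\|$. Substituting into the minimum-error expression \eqref{def_Perror} flips the inequality and delivers $P_{\rm error}(\rho_{AB})\leq P_{\rm error}(\rho_A)$.

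There is no substantive obstacle. The only point requiring verification is that the $N$-fold composition $\mathcal{E}'=[\mathcal{E}_I\mathcal{E}_\theta]^N$ is itself a channel on $A$ alone, so that the tensor extension with $I_B$ is legitimate; this is immediate from the iterative construction. It is worth emphasizing that the one-sided inequality for $P_{\rm error}$ leaves open the possibility that entanglement strictly reduces the discrimination error. The subsequent sections of the paper will need to address and answer that question in the negative for this specific family of IFM channels, but the present theorem does not require that stronger conclusion.
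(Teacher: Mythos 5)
Your proposal is correct and follows essentially the same route as the paper's proof: the equality for $P_{\rm loss}$ from the commutation of the partial trace over $B$ with the channel acting on $A$ alone, and the inequality for $P_{\rm error}$ from contractivity of the trace norm under the (trace-preserving) partial trace applied to the Helstrom operator, which the paper states as monotonicity of a ``generalized trace distance'' and proves in its appendix. Your closing remark that the one-sided inequality leaves the entanglement question open, to be settled later for these specific channels, also matches the paper's logic.
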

\begin{proof}
For $P_{\rm loss}$, by the definition in Eq.~\eqref{def_Ploss}, Eq.~\eqref{Ploss_bi_single}
shows
\begin{equation}\label{}
  \langle 3|Tr_B[\mathcal{E}'(\rho_{AB})]|3\rangle=
  \langle 3|\mathcal{E}'(\rho_{A})|3 \rangle,
\end{equation}
which is right since partial trace operation on system $B$ and the quantum channel on system $A$ commute with each other. It means that any bipartite input state $\rho_{AB}$ behaves the same as its corresponding marginal state $\rho_A$ considering $P_{\rm loss}$.

For $P_{\rm error}$, using the definition in Eq.~\eqref{def_Perror}, Eq.~\eqref{Perror_bi_single} is equivalent to
\begin{equation}\label{}
  \|q\mathcal{E}'(\rho_{AB})-(1-q)\mathcal{E}''(\rho_{AB})\|\geq\|q\mathcal{E}'(\rho_{A})-(1-q)\mathcal{E}''(\rho_{A})\|.
\end{equation}
It is also right because partial trace operation on $B$ is certainly a trace-preserving operation that is contractive under the measure of trace distance (see \cite{nielsen2010quantum} and Appendix.~\ref{contractive}), i.e.,
\begin{equation*}\label{}
\begin{aligned}
  &\|q\mathcal{E}'(\rho_{AB})-(1-q)\mathcal{E}(\rho_{AB})\|\\
  &\geq\|q Tr_B[\mathcal{E}'(\rho_{AB})]-(1-q)Tr_B[\mathcal{E}(\rho_{AB})]\|\\
 &=\|q\mathcal{E}'(\rho_{A})-(1-q)\mathcal{E}(\rho_{A})\|
\end{aligned}
\end{equation*}
where the last line is due to partial trace on system B commuting with the quantum channel on system A.
\end{proof}
\section{simplify the quantum channel for pure state}\label{sec3}
Owing to $Th.~\ref{t1}$, we just need to focus on the pure input state scenario and the quantum channels defined in $Sec.~\ref{sec2}$ can be simplified when the input state is pure.

First, the input photon state is set as the general form $|\varphi\rangle=\alpha|1\rangle + \beta|2\rangle$.
Then, for the presence of the object scenario, since the IFM process halts in the probability subspace where the photon decays to the loss state $\ket{3}$, we just need to monitor the probability subspace where the photon is not absorbed; and the corresponding unnormalized state (due to absorption process) is denoted as $\ket{\varphi'}$.
Therefore the final photon state with the object existing is the combination of $|\varphi'\rangle\langle \varphi'|$ (not absorbed part) and $(1-\langle \varphi'|\varphi'\rangle)|3\rangle \langle 3|$ (absorbed part), i.e.,
\begin{equation}
\begin{aligned}
\rho'=\ket{\varphi'}\bra{\varphi'}+(1-\langle \varphi'|\varphi'\rangle)|3\rangle \langle 3|\ .
\end{aligned}
\end{equation}
And for the absence of the object scenario, the final output photon state is denoted by $|\varphi''\rangle$.

Then the quantum channels can be replaced by the corresponding transforming matrices for pure state in the $|1\rangle, |2\rangle$ basis as
\begin{equation}\label{trans_matrix_object}
\left[\left(
  \begin{array}{cc}
    1 & 0 \\
    0 & a \\

  \end{array}
\right)
\left(
  \begin{array}{cc}
    \cos\theta & -\sin\theta \\
    \sin\theta &  \cos\theta \\
  \end{array}
\right)\right ]^N
|\varphi\rangle=|\varphi'\rangle,
\end{equation}
\begin{equation}\label{trans_matrix_nonobject}
\left[
\left(
  \begin{array}{cc}
    \cos\theta & -\sin\theta \\
    \sin\theta &  \cos\theta \\
  \end{array}
\right) \right]^N
|\varphi\rangle=
\left(
  \begin{array}{ccc}
    0 & -1\\
    1 & 0 \\
  \end{array}
\right)|\varphi\rangle=|\varphi''\rangle.
\end{equation}

Here, Eqs.~\eqref{trans_matrix_object}, \eqref{trans_matrix_nonobject} give the relations between $|\varphi\rangle=\alpha|1\rangle + \beta|2\rangle$ and $|\varphi'\rangle$, $|\varphi''\rangle$. Eq.~(\ref{trans_matrix_nonobject}) is just the unitary transformation generated by the rotation operation $R_\theta$ (Eq.~\eqref{rota_3state}) on the photon state in each cycle and we obtain the final state by iterating it for $N$ times. $Eq.~\eqref{trans_matrix_object}$ is not a unitary, describing the decaying of photon to the loss state due to absorption. The state first undergoes the rotation operation, and then the matrix accounting for decay generated by the Kraus operator $A_0$ in Eq.~\eqref{channel_123} operates on it; and the final photon state is also obtained by iterating this process. Actually, $\langle \varphi'|\varphi'\rangle$ is just the conditional probability that the photon is not absorbed conditioning on the object present.

Then $P_{\rm loss}$ and $P_{\rm error}$ can be written in the new form for pure input state as
\begin{equation}\label{Ploss_pure}
\begin{aligned}
 P_{\rm loss}&=q  \ \langle 3|\rho'|3 \rangle,\\
&=q  \ \langle 3||\varphi'\rangle\langle \varphi'|+(1-\langle \varphi'|\varphi'\rangle)|3\rangle \langle 3||3 \rangle,\\
&=q\ (1-\langle \varphi'|\varphi'\rangle),
\end{aligned}
\end{equation}
where the last equality is due to the state $\ket{\varphi'}$ living on the subspace expanded by $|1\rangle, |2\rangle$ basis and having no component on  $|3\rangle$.
\begin{equation}\label{Perror_pure}
\begin{aligned}
&P_{\rm error}=\frac{1}{2}\{1-\|q\rho'- (1-q)|\varphi''\rangle\langle \varphi''| \|\}, \\
&=\frac{1}{2}\{1-\|q[|\varphi'\rangle\langle \varphi'| +
(1-\langle \varphi'|\varphi'\rangle)|3\rangle \langle 3|]- (1-q)|\varphi''\rangle\langle \varphi''| \|\}, \\
&=\frac{1}{2}\{1-\|q |\varphi'\rangle\langle \varphi'| +
P_{\rm loss}|3\rangle \langle 3|]- (1-q)|\varphi''\rangle\langle \varphi''| \|\}, \\
&=\frac{1}{2}\{1-P_{\rm loss}-\|q |\varphi'\rangle\langle \varphi'|
- (1-q)|\varphi''\rangle\langle \varphi''| \|\}.\\
\end{aligned}
\end{equation}
Here the definition of $P_{\rm loss}$ in Eq.~\eqref{Ploss_pure} is employed in the third line; and in the last line, $P_{\rm loss}|3\rangle \langle 3|$ term is extracted from the trace norm since $\ket{\varphi'}, \ket{\varphi''}$ are both in the space spanned by $|1\rangle, |2\rangle$. To be specific, it describes the fact that we will make no error and can always confirm there is an object if the photon is absorbed by it.

The general form of the entangled photon input state is $|\varphi_e\rangle=\alpha|1\rangle|\phi_1\rangle+\beta|2\rangle|\phi_2\rangle$, where $|\phi_1\rangle$, $|\phi_2\rangle$ are any pure states of the photon $B$ part. It is not hard to find that the transforming matrices Eqs.~\eqref{trans_matrix_object}, \eqref{trans_matrix_nonobject} and the $P_{\rm loss}, P_{\rm error}$ expressions Eqs.~\eqref{Ploss_pure}, \eqref{Perror_pure} are also suitable for the entangled photon input case because the transfer matrices play the same role as quantum channels for the photon $A$ part.
In the rest of our article, the single photon and entangled photon input state we consider are $|\varphi_s\rangle=\alpha|1\rangle + \beta|2\rangle$ and $|\varphi_e\rangle=\alpha|1\rangle|\phi_1\rangle+\beta|2\rangle|\phi_2\rangle$ type respectively.

At the end of this section, we show another useful theorem below that describes the condition where the error probability $P_{\rm error}$ can reach $0$. In other words, we can judge whether there is an object without any error.
\begin{theorem}\label{t3}
$P_{\rm error}$ equals to $0$ for pure input state iff $\braket{\varphi''}{\varphi'}=0$.
\end{theorem}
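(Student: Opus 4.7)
The plan is to start directly from the simplified expression for $P_{\rm error}$ in Eq.~\eqref{Perror_pure}. Writing $p:=\braket{\varphi'}{\varphi'}$ so that $P_{\rm loss}=q(1-p)$ and $1-P_{\rm loss}=qp+(1-q)$, the condition $P_{\rm error}=0$ is equivalent to
\begin{equation}
\|T\| = qp+(1-q), \qquad T:=q\ket{\varphi'}\bra{\varphi'}-(1-q)\ket{\varphi''}\bra{\varphi''}.
\end{equation}
Thus the whole proof reduces to computing the trace norm of the rank-(at most) two Hermitian operator $T$ and comparing it with the right-hand side.

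To evaluate $\|T\|$ in closed form I would exploit that $T$ is supported on the two-dimensional subspace $V=\mathrm{span}\{\ket{\varphi'},\ket{\varphi''}\}$. Writing $\ket{\varphi'}=\sqrt{p}\,\ket{\tilde\varphi'}$ with $\ket{\tilde\varphi'}$ a unit vector, I would invoke (or quickly re-derive via a $2\times 2$ eigenvalue calculation using $\mathrm{Tr}\,T$ and $\det T$ on $V$) the standard identity
\begin{equation}
\bigl\|c_1\ket{\psi_1}\bra{\psi_1}-c_2\ket{\psi_2}\bra{\psi_2}\bigr\|^2 = (c_1+c_2)^2 - 4c_1 c_2\,|\braket{\psi_1}{\psi_2}|^2,
\end{equation}
valid for unit vectors $\ket{\psi_1},\ket{\psi_2}$ and $c_1,c_2\ge 0$. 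Applying it with $c_1=qp$, $c_2=1-q$, $\ket{\psi_1}=\ket{\tilde\varphi'}$, $\ket{\psi_2}=\ket{\varphi''}$, and using the rescaling $|\braket{\tilde\varphi'}{\varphi''}|^2=|\braket{\varphi'}{\varphi''}|^2/p$, yields
\begin{equation}
\|T\|^2 = \bigl(qp+1-q\bigr)^2 - 4q(1-q)\,|\braket{\varphi'}{\varphi''}|^2.
\end{equation}

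Comparing with the required value $\|T\|=qp+(1-q)$ and restricting to the physically meaningful regime $0<q<1$, the condition collapses to $|\braket{\varphi'}{\varphi''}|^2=0$, i.e.\ $\braket{\varphi''}{\varphi'}=0$, and both directions of the iff are read off at once. Degenerate sub-cases such as $p=0$ (where $\ket{\varphi'}=0$ makes the orthogonality automatic and $P_{\rm error}=0$ follows directly from Eq.~\eqref{Perror_pure}) require only a quick sanity check.

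The main technical obstacle I foresee is tracking the sub-normalization factor $\sqrt{p}$ of $\ket{\varphi'}$ cleanly through the trace-norm formula, so that $p$ cancels and the final expression depends only on the physically meaningful overlap $\braket{\varphi'}{\varphi''}$ itself. As a conceptual cross-check I would add a short remark: zero error in minimum-error state discrimination is equivalent to the two weighted output states having orthogonal supports. Because $\ket{\varphi''}\in\mathrm{span}\{\ket{1},\ket{2}\}$ is automatically orthogonal to $\ket{3}$, and the support of $\mathcal{E}'(\rho)$ on the $\{\ket{1},\ket{2}\}$ block is exactly $\mathrm{span}\{\ket{\varphi'}\}$, support-orthogonality collapses to $\braket{\varphi''}{\varphi'}=0$, giving the same criterion with essentially no computation.
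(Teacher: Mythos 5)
Your proof is correct and follows essentially the same route as the paper: the paper likewise reduces $P_{\rm error}=0$ to evaluating the trace norm of $q\ket{\varphi'}\bra{\varphi'}-(1-q)\ket{\varphi''}\bra{\varphi''}$ via exactly the rank-two identity you quote (stated there as Lemma~1 with the normalization $c_2=1$ and the weight $p=q\braket{\varphi'}{\varphi'}/(1-q)$), obtaining $P_{\rm error}=\tfrac12\bigl[q\braket{\varphi'}{\varphi'}+1-q-\sqrt{(q\braket{\varphi'}{\varphi'}+1-q)^2-4q(1-q)|\braket{\varphi''}{\varphi'}|^2}\,\bigr]$ and reading off the iff. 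Your handling of the sub-normalization of $\ket{\varphi'}$ and the degenerate case $p=0$ matches the paper's treatment in substance.
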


Before we prove Th.~\ref{t3}, let us show a lemma below that is useful to our proof.
\begin{lemma}\label{tr_fidelity_lemma}
Given two pure quantum state, $\ket{\psi_1}$,  $\ket{\psi_2}$ and a positive real number $p$, the following relation hold,
\begin{equation}
  \|p|\psi_1\rangle\langle \psi_1| - |\psi_2\rangle \langle \psi_2| \|=\sqrt{(p+1)^2-4p|\langle \psi_1|\psi_2\rangle|^2}.
 \end{equation}
\end{lemma}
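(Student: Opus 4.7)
The plan is to exploit the fact that the Hermitian operator $O := p|\psi_1\rangle\langle\psi_1| - |\psi_2\rangle\langle\psi_2|$ has rank at most two, with its entire support contained in the subspace $S = \mathrm{span}(|\psi_1\rangle,|\psi_2\rangle)$. Outside $S$ the operator vanishes, so the trace norm reduces to the sum of absolute values of the (at most two) nonzero eigenvalues of its restriction $O|_S$. The strategy is therefore to write $O|_S$ explicitly as a $2\times2$ Hermitian matrix, read off its trace and determinant, and combine them to get $\|O\|$ without needing to diagonalize.

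Concretely, I would first choose an orthonormal basis $\{|e_1\rangle,|e_2\rangle\}$ for $S$ (assuming $|\psi_1\rangle,|\psi_2\rangle$ linearly independent; the parallel case is an easy separate check) with $|e_1\rangle = |\psi_1\rangle$ and $|\psi_2\rangle = c|e_1\rangle + s|e_2\rangle$, where $c = \langle\psi_1|\psi_2\rangle$ and $s$ is fixed with $|s|^2 = 1-|c|^2$. Expanding $|\psi_2\rangle\langle\psi_2|$ in this basis gives
\begin{equation}
O|_S = \begin{pmatrix} p - |c|^2 & -c s^{*} \\ -c^{*} s & -|s|^2 \end{pmatrix}.
\end{equation}
Its trace is $\mathrm{tr}(O|_S) = p - 1$ and its determinant is $\det(O|_S) = -p(1-|c|^2) \le 0$.

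Because the determinant is non-positive, the two eigenvalues $\lambda_{\pm}$ of $O|_S$ have opposite signs (or one is zero). Consequently the trace norm satisfies
\begin{equation}
\|O\| = |\lambda_{+}| + |\lambda_{-}| = \sqrt{(\lambda_{+}+\lambda_{-})^2 - 4\lambda_{+}\lambda_{-}},
\end{equation}
the key identity being that when $\lambda_+\lambda_- \le 0$, $(|\lambda_+|+|\lambda_-|)^2 = (\lambda_++\lambda_-)^2 - 4\lambda_+\lambda_-$. Substituting the trace and determinant computed above yields $\|O\| = \sqrt{(p-1)^2 + 4p(1-|c|^2)} = \sqrt{(p+1)^2 - 4p|\langle\psi_1|\psi_2\rangle|^2}$, which is the claimed formula.

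The only subtle point, and what I would flag as the main obstacle, is the sign argument: one must justify that $\lambda_+\lambda_-\le 0$ before replacing $|\lambda_+|+|\lambda_-|$ by $\sqrt{(\mathrm{tr})^2-4\det}$. This follows at once from $\det(O|_S) = -p(1-|c|^2)\le 0$ for $p>0$, so it is immediate here; I would simply remark on this and also note the degenerate cases $|c|=1$ (where $|\psi_1\rangle \propto |\psi_2\rangle$ and $O$ is rank one with eigenvalue $p-1$, in which case the formula collapses to $|p-1|$) and $p=1$, both of which can be checked by direct substitution as sanity checks.
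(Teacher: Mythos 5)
Your proof is correct and follows essentially the same route as the paper's: both restrict the rank-two operator to the two-dimensional span of $|\psi_1\rangle,|\psi_2\rangle$, compute the two eigenvalues of the resulting $2\times2$ Hermitian matrix, and sum their absolute values (the paper parametrizes via Pauli matrices and an angle $\gamma$ with $\cos^2\tfrac{\gamma}{2}=|\langle\psi_1|\psi_2\rangle|^2$, whereas you package the same computation through the trace and determinant). One small merit of your write-up is that you explicitly verify $\lambda_+\lambda_-\le 0$ before using $|\lambda_+|+|\lambda_-|=\sqrt{(\mathrm{tr})^2-4\det}$, a sign condition the paper's proof uses tacitly.
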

And we leave the proof of Lemma.~\ref{tr_fidelity_lemma} in the Appendix.~\ref{} for conciseness. Now we begin to prove Th.~\ref{t3}.

\begin{proof}
With the definition in Eq.~\eqref{Perror_pure}, $P_{\rm error}$ being
\begin{equation}\label{Perror_zero}
\begin{aligned}
P_{\rm error}=&\frac{1}{2}\{1-P_{\rm loss}-\|q|\varphi'\rangle\langle \varphi'| - (1-q)|\varphi''\rangle \langle \varphi''| \| \},\\
=&\frac{1}{2}\{q\langle\varphi'|\varphi'\rangle+1-q-\|q|\varphi'\rangle\langle \varphi'| - (1-q)|\varphi''\rangle \langle \varphi''| \| \},\\
=&\frac{1}{2}\{q\langle\varphi'|\varphi'\rangle+1-q\\
&-(1-q)\|\frac{q\bra{\varphi'}\varphi'\rangle}{1-q}\frac{|\varphi'\rangle\langle \varphi'|}{\bra{\varphi'}\varphi'\rangle}-|\varphi''\rangle \langle \varphi''| \| \},\\
=&\frac{1}{2}\{q\langle\varphi'|\varphi'\rangle+1-q\\
&-\sqrt{(q\langle\varphi'|\varphi'\rangle+1-q)^2-4q(1-q)|\langle\varphi''|\varphi'\rangle|^2}\},\\
\end{aligned}
\end{equation}
where in the second line we apply the definition of $P_{\rm loss}$ in Eq.~\eqref{Ploss_pure}; and in the last line we employ Lemma.~\ref{tr_fidelity_lemma}, by substituting $\frac{\ket{\varphi'}}{\sqrt{\bra{\varphi'}\varphi'\rangle}}$, $\ket{\varphi''}$ for $\ket{\psi_1}$, $\ket{\psi_2}$, and $\frac{q\bra{\varphi'}\varphi'\rangle}{1-q}$ for $p$.
Then, let us observe the last line in Eq.~\eqref{Perror_zero}:
the second part in the square root, i.e., $4q(1-q)|\langle\varphi''|\varphi'\rangle|^2$, is non-negative, so it is not hard to find that $P_{\rm error}$ can reach $0$ iff $\braket{\varphi''}{\varphi'}=0$.
\end{proof}
\section{opaque object case}\label{sec4}
We study IFM of opaque object with finite interrogation cycle $N$ in this section.
Our task is to use the model simplified in $Sec.~\ref{sec3}$ to find the minimal values of the two important probabilities $P_{\rm loss}$ and $P_{\rm error}$, and the corresponding states to reach them. When the object is opaque, i.e., $a=0$, $Eq.~\eqref{trans_matrix_object}$ shows:
\begin{equation}\label{matrix_absorb_opaque}
\left(
  \begin{array}{cc}
    \cos^N\theta & -\sin\theta\cos^{N-1}\theta \\
    0 &  0 \\

  \end{array}
\right)
|\varphi\rangle=|\varphi'\rangle.
\end{equation}
\subsection{$P_{\rm loss}$ and $P_{\rm error}$ study with single photon input state }\label{subs41}
First, let us focus on the loss probability $P_{\rm loss}$.
Setting the input state as $|\varphi\rangle=\alpha|1\rangle + \beta|2\rangle$ and using $Eqs.~\eqref{matrix_absorb_opaque}, \eqref{Ploss_pure}$, we can get
\begin{equation}\label{state_object}
|\varphi'\rangle=\cos^{N-1} \theta(\alpha \cos \theta-\beta \sin \theta)|1\rangle,
 \end{equation}
\begin{equation}\label{Ploss_opaque_inequality}
\begin{aligned}
P_{\rm loss}&=q(1-|\cos^{N-1} \theta(\alpha \cos \theta-\beta \sin \theta)|^2), \\
&\geq q(1-\cos^{2(N-1)} \theta),
\end{aligned}
 \end{equation}
where the inequality in the second line of Eq.~\eqref{Ploss_opaque_inequality} is due to the fact that the absolute value of the inner product for the two vectors $(\cos \theta, -\sin \theta)^T$ and $(\alpha, \beta)^T$ is not larger than $1$.
And it is not hard to find that the minimum can be reached by the state $|\varphi_a\rangle= \cos \theta |1\rangle - \sin \theta|2\rangle$, with the global phase neglected (since the global phase actually makes no difference to the state in the IFM process, we always neglect it without announcement in the following).

Then we study the error probability $P_{\rm error}$. And we calculate the value of $\langle \varphi'' |\varphi'\rangle$ and check whether there are input states can fulfill the condition stated in $Th.~\ref{t3}$ and let $P_{\rm error}$ reach $0$. With the help of $Eq.~\eqref{trans_matrix_nonobject}$ and $Eq.~\eqref{state_object}$, we have
\begin{equation}\label{}
  |\varphi''\rangle=-\beta |1\rangle+\alpha |2\rangle,
 \end{equation}
\begin{equation}\label{}
 \langle \varphi'' |\varphi'\rangle=-\beta^* \cos^{N-1} \theta(\alpha \cos \theta-\beta \sin \theta).
 \end{equation}

It is not difficult to find that there are two states can make $\langle \varphi'' |\varphi'\rangle=0$. The first one is $|\varphi_b\rangle=|1\rangle$, and the second one is $|\varphi_c\rangle=\sin \theta |1\rangle + \cos \theta |2\rangle$. That is, we can both realize zero error in IFM with these two states.

Thus, it is necessary to compare the loss probability $P_{\rm loss}$ of $|\varphi_b\rangle$, $|\varphi_c\rangle$. And the $P_{\rm loss}$ of the two states are $ q(1-\cos^{2N} \theta)$ and $q$ respectively, by the definition of $Eq.~\eqref{Ploss_pure}$. It indicates that the first state is better than the second one considering $P_{\rm loss}$, since when $N$ is large enough,
\begin{equation}\label{}
 q(1-\cos^{2N} \theta) \simeq q\frac{\pi^2}{4N} \ll q.
 \end{equation}
And note that the $P_{\rm loss}$ of $|\varphi_b\rangle$ will approaches $0$, as $N\rightarrow \infty$~~\cite{kwiat1995interaction,kwiat1999high}.  For the second state $|\varphi_c\rangle$, the photon is always lost if the object is there; it is the reason why $|\varphi_c\rangle$ can detect the object without any error. But it is useless in our problem since it violates the principle of IFM, i.e., detecting the object with as small as possible photon loss probability.
\subsection{$P_{\rm loss}$ and $P_{\rm error}$ study with entangled photon input state}\label{subs42}
Here we want to study the power of quantum correlation for IFM in the opaque object case. So we set the initial input state as the general form $|\varphi\rangle=\alpha|1\rangle|\phi_1\rangle+\beta|2\rangle|\phi_2\rangle$ , where $|\phi_1\rangle$, $|\phi_2\rangle$ are any pure states of the photon $B$ part and $\alpha$, $\beta$ are non-negative real numbers (one can always remove the phase information in $\alpha$, $\beta$ to the states $|\phi_1\rangle$, $|\phi_2\rangle$ of the photon $B$ part to obtain this form).

As mentioned earlier, the equations utilized in the single photon input case can also be used in this entangled photon input case. And we should do the transforming matrix operations on the photon $A$ part and evaluate the two probabilities $P_{\rm loss}$ and $P_{\rm error}$ in the same way as in $~\ref{subs41}$.

We first study the loss probability $P_{\rm loss}$. Using $Eq.~\eqref{matrix_absorb_opaque}$, we have
\begin{equation}\label{state_object_entangle}
 |\varphi'\rangle=\cos ^{N-1}\theta|1\rangle(\alpha \cos \theta|\phi_1\rangle-\beta \sin \theta|\phi_2\rangle),
 \end{equation}
and with the help of $Eq.~\eqref{Ploss_pure}$, the loss probability shows
\begin{equation}\label{}
\begin{aligned}
  P_{\rm loss}&=q[1-|\cos ^{N-1}\theta(\alpha \cos \theta|\phi_1\rangle-\beta \sin \theta|\phi_2\rangle)|^2],\\
  &\geq q[1-\cos^{2(N-1)} \theta(\alpha \cos \theta+\beta \sin \theta)^2],\\
  &\geq q[1-\cos^{2(N-1)} \theta].
\end{aligned}
\end{equation}
Here the first inequality is saturated when $|\phi_1\rangle=-|\phi_2\rangle$,  and the second inequality is saturated when $ \alpha=\cos \theta$ and $\beta=\sin \theta$ . Thus, the minimum of $P_{\rm loss}$ can be reached by the state $|\varphi_a^*\rangle=(\cos\theta|1\rangle-\sin\theta|2\rangle)|\phi_1\rangle$, where we use the superscript $*$ label the bipartite state.
Especially, it is a product state, which is equivalent to the state $|\varphi_a\rangle$  in the single photon input case after neglecting the photon $B$ part.

Next, we study the error probability $P_{\rm error}$  in this entangled photon input case. With $Eq.~\eqref{state_object}$, we can obtain the output state in the absence of the object as
\begin{equation}\label{}
  |\varphi''\rangle=\alpha |2\rangle|\phi_1\rangle-\beta |1\rangle|\phi_2\rangle,
 \end{equation}
and applying Eq.~\eqref{state_object_entangle}, the value of $\langle \varphi'' |\varphi'\rangle$ shows the following form
\begin{equation}\label{innerp_opaque_entangle}
  \langle \varphi'' |\varphi'\rangle
  =(-\alpha\beta\cos\theta\langle\phi_2|\phi_1\rangle+\beta^2\sin\theta)\cos^{N-1} \theta.
\end{equation}
From Eq.~\eqref{innerp_opaque_entangle}, we can get a family of the solutions for $\langle \varphi'' |\varphi'\rangle=0$ which satisfies
\begin{equation}\label{family_perror_0}
  \frac{\beta\sin\theta}{\alpha\cos\theta}=\langle\phi_2|\phi_1\rangle.
 \end{equation}
The two solutions in the single photon input case are both included in  $Eq.~\eqref{family_perror_0}$. They are the states $|\varphi_b^*\rangle=|1\rangle|\phi_1\rangle$ and $|\varphi_c^*\rangle=(\sin \theta |1\rangle + \cos \theta |2\rangle)|\phi_1\rangle$. Now we shall check which one is the best state in this family considering $P_{\rm loss}$. Using $Eqs.~\eqref{Ploss_pure},~\eqref{family_perror_0}$, we get
\begin{equation}\label{}
\begin{aligned}
  P_{\rm loss}
  =q[1-\cos \theta^{2(N-1)}(\alpha^2 \cos^2\theta-\beta^2 \sin ^2\theta)].
\end{aligned}
\end{equation}
It is clear that $|\varphi_b^*\rangle=|1\rangle|\phi_1\rangle$ reaches the minimum $q(1-\cos^{2N} \theta)$ in this family, which is equivalent to $|\varphi_b\rangle$ in the single photon input case.

From $~\ref{subs41},~\ref{subs42}$, we conclude that the entangled photon input state makes no enhancement to the optimization for the two important probabilities $P_{\rm loss}$ and $P_{\rm error}$ respectively, compared with single photon input state, and the states which reach the minimums are the same in some sense in these two cases. In addition, the state $|1\rangle$ is the optimal state which can make $P_{\rm loss}$ and $P_{\rm error}$ both reach zero when $N\rightarrow \infty$.
\section{semitransparent object case}\label{sec5}
In this section, we go further for the general case. In practical application of IFM, the object is always semitransparent, i.e., partially absorbing the photon. Thus, we will study the minimal $P_{\rm loss}$ and $P_{\rm error}$, and the states to reach them also in this semitranparent object case, just like in the opaque object case. In addition, the effect of quantum entanglement is also investigated.
\subsection{simplify the transforming matrix}\label{subs51}
The major difficulty to study the general case is to simplify the transforming matrix in $Eq.~\eqref{trans_matrix_object}$. First we can represent the matrix in one interrogation cycle with Pauli matrices as
\begin{equation}\label{}
\begin{aligned}\
  C_0=&\left(
  \begin{array}{cc}
    1 & 0 \\
    0 & a \\
  \end{array}
\right)
\left(
  \begin{array}{cc}
    \cos\theta & -\sin\theta \\
    \sin\theta &  \cos\theta \\
  \end{array}
\right),\\
  =&\frac{(1-a)\cos\theta}{2} \sigma_z
-\frac{i(1+a)\sin\theta}{2} \sigma_y - \frac{(1-a)\sin\theta}{2} \sigma_x \\
&+\frac{(1+a)\cos\theta}{2}I. \\
\end{aligned}
\end{equation}

Then we changes the basis by applying a unitary transformation $U=e^{-i\frac{\sigma_y}{2}\theta}$ and obtain
\begin{equation}\label{newbasis_objexist}
\begin{aligned}\
C_1=&UC_0U^\dag,\\
=&\frac{(1-a)}{2} \sigma_z
-\frac{i(1+a)\sin\theta}{2} \sigma_y+\frac{(1+a)\cos\theta}{2}I,\\
=&\frac{(1-a)}{2}(\sigma_z-ik_1\sigma_y+k_2I),\\
\end{aligned}
\end{equation}
where we use
\begin{equation}\label{k1_k2}
\begin{aligned}\
k_1=\frac{(1+a)\sin\theta}{1-a},\\
k_2=\frac{(1+a)\cos \theta}{1-a},
\end{aligned}
\end{equation}
for simplicity, and they are both positive numbers.
Because we sample all the states in the Hilbert space, the change of the basis or the unitary transformation does not matter. Thus hereafter, we handle the semitransparent object scenario IFM in the new basis for simplicity. And in the following, any matrix $O$ should be changed to $UOU^\dag$; $\ket{\varphi}$ labels a specific vector coordinate in the new basis for convenience and $U^\dag\ket{\varphi}$ is the same state but the coordinate value is obtained in the old basis.

The power $N$ of the matrix $C_1$, labeled by $C$, can be calculated by expanding the binomial with the help of the equality $(\sigma_z-ik_1\sigma_y)^2=1-k_1^2$, which is the result of the anti-commutation relation $\{\sigma_z,\sigma_y\}=0$.
\begin{equation}\label{C_newbasis}
\begin{aligned}
C=&C_1^N,\\
 =&(\frac{1-a}{2})^N [(\sigma_z-ik_1\sigma_y)+k_2I)]^N,\\
 =&(\frac{1-a}{2})^N [\sum_{k\in odd}
\binom{N}{k}(1-k_1^2)^{\frac{k-1}{2}}k_2^{N-k}(\sigma_z-ik_1\sigma_y)\\
&+\sum_{k\in even}
\binom{N}{k}(1-k_1^2)^{\frac{k}{2}}k_2^{N-k}I],\\
  =&(\frac{1-a}{2})^N [f_1(\sigma_z-ik_1\sigma_y)+f_2I],
\end{aligned}
\end{equation}
where we substitute $f_1$ and $f_2$ for the summations before the operators $(\sigma_z-ik_1\sigma_y)$ and $I$ respectively. In fact, $f_1$ and $f_2$ are related to the summations of the even and odd terms in the corresponding binomial.

Thus, we define $\Sigma_1$ and $\Sigma_2$ as below, which are sum of the odd and even terms of the corresponding binomial. When $k_1\leq1$:
\begin{equation}\label{sum_k1small}
  \left\{
    \begin{aligned}
      &\Sigma_1=\frac{(\sqrt{1-k_1^2}+k_2)^N-(-\sqrt{1-k_1^2}+k_2)^N}{2}\\
      &\Sigma_2=\frac{(\sqrt{1-k_1^2}+k_2)^N+(-\sqrt{1-k_1^2}+k_2)^N}{2}
    \end{aligned}
  \right.
\end{equation}
when $k_1>1$:
\begin{equation}\label{sum_k1large}
  \left\{
    \begin{aligned}
      &\Sigma_1=\frac{(i\sqrt{k_1^2-1}+k_2)^N-(-i\sqrt{k_1^2-1}+k_2)^N}{2}\\
      &\Sigma_2=\frac{(i\sqrt{k_1^2-1}+k_2)^N+(-i\sqrt{k_1^2-1}+k_2)^N}{2}
    \end{aligned}
  \right.
\end{equation}
Then we can obtain the expressions for $f_1$ and $f_2$ in $Eq.~\eqref{C_newbasis}$ with $\Sigma_1$ and $\Sigma_2$, when $k_1\leq1$:
\begin{equation}\label{f_k1small}
  \left\{
    \begin{aligned}
      &f_1=\frac{\Sigma_1}{\sqrt{1-k_1^2}}\\
      &f_2=\Sigma_2
    \end{aligned}
  \right.
\end{equation}
when $k_1>1$:
\begin{equation}\label{f_k1large}
  \left\{
    \begin{aligned}
      &f_1=\frac{\Sigma_1}{i\sqrt{k_1^2-1}}\\
      &f_2=\Sigma_2
    \end{aligned}
  \right.
\end{equation}

The insight of the above result is that the eigenstates of $C_1$ and $C$ should be the same and the eigenvalues from $C$ are just the power $N$ of the ones from $C_1$ . So the structures of $Eq.~\eqref{newbasis_objexist},~\eqref{C_newbasis}$  are also the same, linear combination of $(\sigma_z-ik_1\sigma_y)$ and $I$. Especially, $(\sigma_z-ik_1\sigma_y)$ determines the eigenstates and the eigenvalues of it are $\pm\sqrt{1-k_1^2}$. That's why we have the formulas like $Eq.~\eqref{sum_k1small},~\eqref{sum_k1large},~\eqref{f_k1small},~\eqref{f_k1large}$ . Clearly, $f_1$ and $f_2$ are functions of $a$ and $\theta$ and we will show that they are both real positive number in the following theorem.

\begin{theorem}\label{}
 $f_1$ and $f_2$ are both real positive numbers no matter what value $k_1$ is.
\end{theorem}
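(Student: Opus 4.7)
The plan is to establish the two required properties---reality and positivity---separately, and in each case to treat the two regimes $k_1\le 1$ and $k_1>1$ on different footings, since the nature of the square roots appearing in $\Sigma_1,\Sigma_2$ changes across that boundary.

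For reality, the cleanest argument is to observe that $C_1=\frac{1-a}{2}(\sigma_z-ik_1\sigma_y+k_2I)$ is, despite the factor $i$, a matrix with \emph{real} entries $\begin{pmatrix}1+k_2&-k_1\\k_1&-1+k_2\end{pmatrix}$ up to the overall prefactor, so $C=C_1^N$ also has real entries. Since $\sigma_z-ik_1\sigma_y$ and $I$ are real and linearly independent, the coefficients $f_1,f_2$ in the decomposition of $C$ are uniquely determined real numbers. Equivalently, one can verify reality directly from the formulas: when $k_1\le 1$ every term in $\Sigma_1,\Sigma_2$ is manifestly real; when $k_1>1$, writing $\lambda:=k_2+i\sqrt{k_1^2-1}$, one has $\Sigma_2=\mathrm{Re}(\lambda^N)$ (real) while $\Sigma_1=i\,\mathrm{Im}(\lambda^N)$ is purely imaginary, and the prefactor $1/(i\sqrt{k_1^2-1})$ in $f_1$ restores reality.

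For positivity, the case $k_1\le 1$ is immediate: using $k_1^2+k_2^2=(1+a)^2/(1-a)^2\ge 1$ and $k_2>0$, both bases $\pm\sqrt{1-k_1^2}+k_2$ are non-negative, and the first strictly exceeds the second, so $\Sigma_1,\Sigma_2>0$ and $f_1,f_2>0$. The main obstacle is the case $k_1>1$, where the two bases are complex conjugates of modulus $|\lambda|=2\sqrt{a}/(1-a)$ and argument $\phi=\arctan(\sqrt{k_1^2-1}/k_2)\in(0,\pi/2)$. A short computation gives $f_2=\Sigma_2=|\lambda|^N\cos(N\phi)$ and $f_1=|\lambda|^N\sin(N\phi)/\sqrt{k_1^2-1}$, so positivity boils down to the trigonometric bound $N\phi\le\pi/2$.

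I would prove the sharper statement $\phi\le\theta$, which then yields $N\phi\le N\theta=\pi/2$ immediately. Since $\cos\phi=k_2/|\lambda|=(1+a)\cos\theta/(2\sqrt{a})$, the inequality $\phi\le\theta$ is equivalent to $\cos\phi\ge\cos\theta$, i.e., $(1+a)/(2\sqrt{a})\ge 1$, which is just the AM--GM inequality $(\sqrt{a}-1)^2\ge 0$. Hence $\cos(N\phi),\sin(N\phi)\ge 0$, giving $f_1,f_2\ge 0$; the inequalities are strict once $a<1$, which covers the physical semitransparent regime asserted by the theorem. This AM--GM-flavored reduction is the single nontrivial ingredient of the proof, and it is what makes the $k_1>1$ regime behave as benignly as the $k_1\le 1$ regime.
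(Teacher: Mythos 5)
Your proof is correct and follows essentially the same route as the paper's: split on $k_1\le 1$ versus $k_1>1$, handle the first case directly, and in the second case reduce positivity to showing that $(k_2+i\sqrt{k_1^2-1})^N$ lies in the first quadrant via $\phi\le\theta$ and $N\theta=\pi/2$. The only difference is the one-line verification of $\phi\le\theta$: you use $\cos\phi=(1+a)\cos\theta/(2\sqrt{a})\ge\cos\theta$ by AM--GM, whereas the paper uses the even simpler bound $\tan\phi=\sqrt{k_1^2-1}/k_2\le k_1/k_2=\tan\theta$.
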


\begin{proof}
When $k_1\leq1$, $\Sigma_1$ and $\Sigma_2$ are the sum of odd and even terms of $(\sqrt{1-k_1^2}+k_2)^N$ respectively. It is obvious that $f_1$ and $f_2$ are both real positive numbers. When $k_1>1$, $\Sigma_1$ and $\Sigma_2$ are the imaginary and real part of $(i\sqrt{k_1^2-1}+k_2)^N$. We just need to check which quadrant this complex number locates in. Because $\frac{\sqrt{k_1^2-1}}{k_2}\leq \frac{k1}{k2}=\tan\theta$ and $N \theta=\frac{\pi}{2}$, we know it locates in the first quadrant. Then  $f_1$ and $f_2$ are also real positive numbers in this case by the definition Eq.~\eqref{f_k1large}.
\end{proof}
\subsection{$P_{\rm loss}$ study with single photon and entangled photon input state}\label{subs52}
With the knowledge of $~\ref{subs51}$, now we can get the photon loss probability $P_{\rm loss}$ in the new basis by the definition Eq.~\eqref{Ploss_pure} as
\begin{equation}\label{Ploss_newbasis}
\begin{aligned}
P_{\rm loss}&=q(1- \langle\varphi'|\varphi'\rangle),\\
&=q(1- \langle\varphi| C^\dag C|\varphi\rangle),\\
&=q[1- \mathrm{Tr}_{AB}(C^\dag C|\varphi\rangle\langle\varphi|)],\\
&=q[1- \mathrm{Tr}_{A}(C^\dag C\rho_A)],
\end{aligned}
\end{equation}
where in the final line we trace out the photon $B$ part since the transforming matrix $C$ just operates on the photon $A$.
Eq.~\eqref{Ploss_newbasis} reminds us that the entangled photon input state $|\varphi_{AB}\rangle$ behaves the same as $\mathrm{Tr}_B(\varphi_{AB})=\rho_A$ for $P_{\rm loss}$, as showed in $Th.~\ref{t2}$. Especially, if one reaches the minimum of $P_{\rm loss}$ with the single photon input state $|\varphi_A\rangle$, one can surely find any pure state like $|\varphi_A\rangle|\phi_B\rangle$ to reach the same minimal value. Hence we just need to study $P_{\rm loss}$ in the single photon input case.

Thus $\langle\varphi|C^\dag C|\varphi\rangle$ in $Eq.~\eqref{Ploss_newbasis}$ should be maximized only for single photon input state, and $C^\dag C$ can be expanded as
\begin{equation}\label{CdagC}
\begin{aligned}
C^\dag C=
(\frac{1-a}{2})^{2N}[f_1^2(1+k_1^2)+f_2^2]I+2f_1(f_2\sigma_z-f_1k_1\sigma_x).
\end{aligned}
\end{equation}
It is the same as to find the larger eigenvalue for a single spin Hamiltonian. Thus, no matter what the value of $k_1$ is, it is not hard to obtain the minimal $P_{\rm loss}$ being
\begin{equation}\label{Plossmin_newbasis_semi}
\begin{aligned}
(P_{\rm loss})_{min}=q[1-(\frac{1-a}{2})^{2N}(f_1+\sqrt{f_2^2+f_1^2k_1^2})^2].
\end{aligned}
\end{equation}

Utilizing $Eq.~\eqref{Plossmin_newbasis_semi}$, the relation between the normalized photon loss rate $(P_{\rm loss}/q)_{min}$ and the interrogation cycle $N$ for different transparency $a^2$ is exhibited in $Fig.~\ref{Ploss_N_a}$. It shows that when $N$ is large enough, $(P_{\rm loss}/q)_{min}$ decreases with the increasing of $N$ no matter what value $a$ is. Generally speaking, $(P_{\rm loss}/q)_{min}$ of small $a$ is always less than that of large $a$ for a fixed large enough $N$. However, $(P_{\rm loss}/q)_{min}$ can increase and then decrease for large enough $a$ with the increasing of $N$. Via numerical analysis, we find that the maximum of the curve for a given large $a$ can be obtained at $N'$, which is slightly larger than the one determined by the equation $k_1=\frac{1+a}{1-a}\sin(\frac{\pi}{2N})=1$, as showed in
Fig.~\ref{Perror_0_regin}.
\begin{figure}[t!]
\centering
\resizebox{7.5cm}{!}{\includegraphics[scale=0.8]{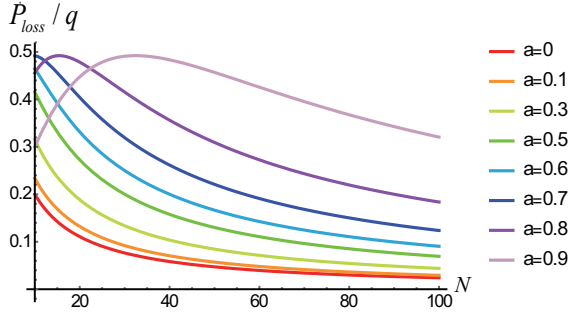}}
\caption{$(P_{\rm loss}/q)_{min}$ vs the interrogation cycle $N$ for different transparency $a^2$}\label{Ploss_N_a}
\end{figure}

The state reaching the minimum of $P_{\rm loss}$, named $|\varphi_0\rangle$, is just the eigenstate of $C^\dag C$ with larger eigenvalue. $|\varphi_0\rangle\langle\varphi_0|$ is on the $xz$ plane of the Bloch sphere with the angle between state the $|\varphi_0\rangle\langle\varphi_0|$ and the $z$ direction is $\theta_1=\arctan(\frac{f_1k_1}{f_2})$ (see Fig.~\ref{Blochs_position}). The corresponding vector
$U^\dag |\varphi_0\rangle$ is the one which reach the minimal $P_{\rm loss}$ in the old basis. And it is not hard to find $U^\dag |\varphi_0\rangle=\ket{\varphi_a}$ when the transparency $a^2=0$, i.e., opaque object case.
\subsection{$P_{\rm error}$ study with single photon and entangled photon input state}\label{subs53}
Here, we derive the error probability $P_{\rm error}$ of IFM in both single photon and entangled photon input scenarios.

For convenience, we label the unitary transformation in $Eq.~\eqref{trans_matrix_nonobject}$ by $D$, as the object is absent.
\begin{equation}\label{}
D=\left(
  \begin{array}{cc}
    0 & -1 \\
    1 &  0 \\
  \end{array}
\right)=-i\sigma_y.
\end{equation}

In $Th.~\ref{t3}$, we have showed that $P_{\rm error}$ can reach $0$ iff $\langle\varphi''|\varphi'\rangle=0$, no matter which type the input state is. By definition, we get $\langle\varphi''|\varphi'\rangle$ in the new basis as
\begin{equation}\label{innerp_newbasis}
\begin{aligned}
\langle\varphi''|\varphi'\rangle=&\langle\varphi|U D^\dag U^\dag C|\varphi\rangle,\\
=&\langle\varphi|D^\dag C|\varphi\rangle,\\
=&\mathrm{Tr}_{AB}(D^\dag C |\varphi\rangle\langle\varphi|),\\
=&\mathrm{Tr}_A(D^\dag C \rho_A).
\end{aligned}
\end{equation}
In the second line, we use the fact that $U$ commutes with $D^\dag$; The third line is due to the fact that $D^\dag$ and $C$ only operate on the photon $A$ part. From the definition of  $D$ and $C$, we have $D^\dag C$ being
\begin{equation}\label{DdagC}
\begin{aligned}
D^\dag C
=\left(\frac{1-a}{2}\right)^{N}[f_1k_1I+(i f_2\sigma_y -f_1\sigma_x)].
\end{aligned}
\end{equation}
In the meantime,  $\rho_A$ has the following Bloch sphere representation,
\begin{equation}\label{two_level_state}
\rho_A=\frac{1}{2}(I+\vec{r}\cdot \vec{\sigma}).
\end{equation}
Then Eq.~\eqref{innerp_newbasis} becomes $\langle\varphi''|\varphi'\rangle=(\frac{1-a}{2})^{N}(f_1k_1-f_1r_x+if_2r_y)$ with the fact that the trace of pauli matrix is $0$. In order to make $P_{\rm error} = 0$, we should let $r_x=k_1$ and $r_y=0$.

\begin{figure}[t!]
\centering
\resizebox{6cm}{!}{\includegraphics[scale=0.6]{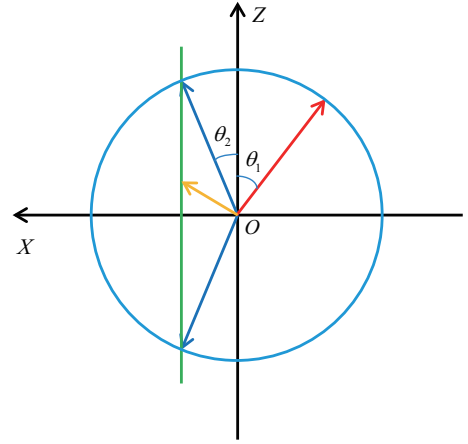}}
\caption{The positions of the states on Bloch sphere which reach the minimal $P_{\rm loss}$ and $P_{\rm error}$ in the new basis. The red vector represents the state $|\varphi_0\rangle\langle\varphi_0|$ which reaches the minimal $P_{\rm loss}$. The two blue vectors represent $|\varphi_\pm\rangle\langle\varphi_\pm|$. Any mixed state on the green line which is the connection between the end of the two blue vectors can satisfy  $\mathrm{Tr}_A(D^\dag C \rho_A)=0$. The yellow vector represents one of these mixed states and its purification is a entangled photon input state $\alpha\ket{\varphi_+}|\phi_1\rangle+\beta\ket{\varphi_-}|\phi_1^\perp\rangle$
, which makes $P_{\rm error}=0$.}\label{Blochs_position}
\end{figure}

When $k_1\leq 1$, there are two pure state solutions $\rho_A=|\varphi_\pm\rangle\langle\varphi_\pm|=\frac{1}{2}(I+k_1\sigma_x\pm\sqrt{1-k_1^2}\sigma_z)$ of photon A. The angle between each pure solution $|\varphi_\pm\rangle\langle\varphi_\pm|$ and the $z$ axis is $\theta_2=\arctan(\frac{k_1}{\sqrt{1-k_1^2}})$  on the Bloch sphere (see $Fig.~\ref{Blochs_position}$). And it is straightforward to see that any convex mixing of the two pure solutions can also lead to $Tr_A(D^\dag C \rho_A)=0$. Therefore, in the bipartite scenario, the solution to $P_{\rm error}=0$ is $\alpha\ket{\varphi_+}|\phi_1\rangle+\beta\ket{\varphi_-}|\phi_1^\perp\rangle$, where $\bra{\phi_1^\perp}\phi_1\rangle = 0$ and $\alpha, \beta$ are two arbitrary state coefficients.
Like in $a=0$ case, we have a family of best states which reach $P_{\rm error}=0$ in the entangled photon input scenario.

Furthermore, we aim to find the solution that minimize the photon loss rate $P_{\rm loss}$ given in $Eq.~\eqref{Ploss_newbasis}$ in this family. Combining the solution to $P_{\rm error}=0$, we can show that the optimal state in this family is $|\varphi_+\rangle\langle\varphi_+|=\frac{1}{2}(I+k_1\sigma_x+\sqrt{1-k_1^2}\sigma_z)$ with the minimal $P_{\rm loss}$ value being
\begin{equation}\label{Ploss_familymin}
\begin{aligned}
(P_{\rm loss})_{|\varphi_+\rangle}=q[1-(\frac{1-a}{2})^{2N}(f_1\sqrt{1-k_1^2}+f_2)^2].
\end{aligned}
\end{equation}
which means entangled photon input state does no good to $P_{\rm error}$ in this $k_1$ regime.
\begin{figure}[t!]
\centering
\resizebox{6.5cm}{!}{\includegraphics[scale=0.8]{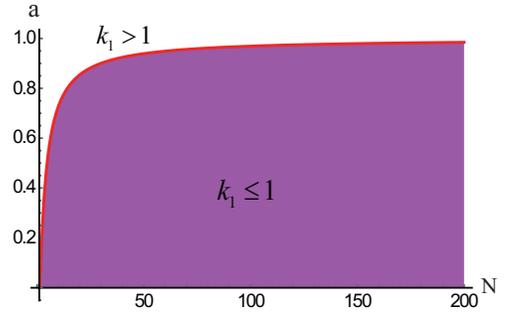}}
\caption{Red curve: transparency $a$ vs interrogation cycle $N$ determined by $k_1=\frac{1+a}{1-a}\sin(\frac{\pi}{2N})=1$. Shadow purple region indicates the parameter domain where we can reach $P_{\rm error}=0$.}\label{Perror_0_regin}
\end{figure}

When $k_1>1$, there is no solution to $\langle\varphi''|\varphi'\rangle=0$ or equivalently $P_{\rm error}=0$. Nevertheless, we can still analyze the nonzero minimum of $P_{\rm error}$. Using Eqs.~\eqref{Perror_zero},~\eqref{Ploss_newbasis} and~\eqref{innerp_newbasis}, we have the general expression of $P_{\rm error}$ being
\begin{equation}\label{perror_general}
\begin{aligned}
P_{\rm error}&=\frac{1}{2}\{qTr[C^\dag C\rho_A]+(1-q)\\
&-\sqrt{(qTr[C^\dag C\rho_A]+1-q)^2-4q(1-q)|Tr[D^\dag C\rho_A]|^2}\},
\end{aligned}
\end{equation}
which is suitable no matter what value $k_1$ is.
It indicates that $|\varphi_{AB}\rangle$ appears in the form $Tr_B(\varphi_{AB})=\rho_A$ for $P_{\rm error}$ in all $k_1$ regime. It is crucial to emphasize that the expression for $P_{\rm error}$ of Eq.~\eqref{perror_general} is suitable for any pure states, single photon or entangled photon input, but not for mixed state $\rho_A$, because our pure state prerequisite.
Moreover, we find the entangled photon input state can not enhance the performance on $P_{\rm error}$ for any values of $k_1$, compared with the single photon input state, i.e., the minimum of Eq.~\eqref{perror_general} should be reached by pure state $\rho_A=\ket{\varphi_A}\bra{\varphi_A}$. The detailed discussion about the effect of the quantum correlation to $P_{\rm error}$  is arranged in Appendix.~\ref{entanglement_enhance}.
\subsection{$N\rightarrow\infty$ behavior}\label{subs54}
In the above subsections, we have systematically analysed the general IFM model of the semitransparent object with finite interrogation cycle. Now, in this part, we study the asymptotic behavior of the relevant quantities when the interrogation cycle $N\rightarrow\infty$.
The behavior of the minimal values for $P_{\rm loss}$, $P_{\rm error}$ and the initial input states which can reach the minimums are investigated in the $N\rightarrow\infty$ condition.

When the interrogation cycle $N\rightarrow\infty$, $k_1=\frac{1+a}{1-a}\sin(\frac{\pi}{2N})\rightarrow0<1$ for any fixed $a$. Therefore we always have the state $|\varphi_+\rangle$ to reach $P_{\rm error}=0$. First we consider the asymptotic behavior of $(P_{\rm loss})_{|\varphi_+\rangle}$, described by $Eq.~\eqref{Ploss_familymin}$. With the help of $Eqs.~\eqref{sum_k1small}, ~\eqref{f_k1small}$ and the definitions of $k_1$, $k_2$ (Eq.~\eqref{k1_k2}), we have
\begin{equation}\label{approximate_1}
\begin{aligned}
&(\frac{1-a}{2})^{2N}(f_1\sqrt{1-k_1^2}+f_2)^2,\\
=&(\frac{1-a}{2})^{2N}(\Sigma_1+\Sigma_2)^2,\\
=&[\frac{1-a}{2}(k_2+\sqrt{1-k_1^2})]^{2N},\\
=&[\frac{(1+a)\cos\theta+\sqrt{(1-a)^2-(1+a)^2\sin^2\theta}}{2}]^{2N},\\
\simeq&[1-\frac{1+a}{1-a}\frac{\pi^2}{8N^2}+O(\frac{1}{N^4})]^{2N},\\
\simeq&1-\frac{1+a}{1-a}\frac{\pi^2}{4N}+O(\frac{1}{N^3}),
\end{aligned}
\end{equation}
where we use the fact the $\cos\theta=1-\frac{\theta^2}{2}+O(\theta^4)$, $\sin\theta=\theta-O(\theta^3)$ and $\theta=\frac{\pi}{2N}$. Then the asymptotic expression of $Eq.~\eqref{Ploss_familymin}$ is
\begin{equation}\label{Ploss_familymin_asym}
\begin{aligned}
(P_{\rm loss})_{|\varphi_+\rangle}^{N\rightarrow\infty}\simeq q[\frac{1+a}{1-a}\frac{\pi^2}{4N}-O(\frac{1}{N^3})].
\end{aligned}
\end{equation}
Clearly, whatever the value of $a$ is, $(P_{\rm loss})_{|\varphi_+\rangle}$ goes to $0$ for sufficient large $N$.

Furthermore, we aim to consider the asymptotic behavior of $Eq.~\eqref{Plossmin_newbasis_semi}$, the minimum of $P_{\rm loss}$. Utilizing the similar approximation technique as for $(P_{\rm loss})_{|\varphi_+\rangle}$, it shows
\begin{equation}\label{Plossmin_asym}
\begin{aligned}
(P_{\rm loss})_{min}^{N\rightarrow\infty}\simeq q[\frac{1+a}{1-a}\frac{\pi^2}{4N}-O(\frac{1}{N^2})].
\end{aligned}
\end{equation}
And the detailed derivation is put in the Appendix.~\ref{Plossmin_asym_App}. In addition,
the asymptotic behavior of $(P_{\rm loss})_{|\varphi_+\rangle}/q, (P_{\rm loss})_{min}/q$ have been plotted in $Fig.~\ref{asymptotic}$.
\begin{figure}[t!]
\centering
\resizebox{9cm}{!}{\includegraphics[scale=0.6]{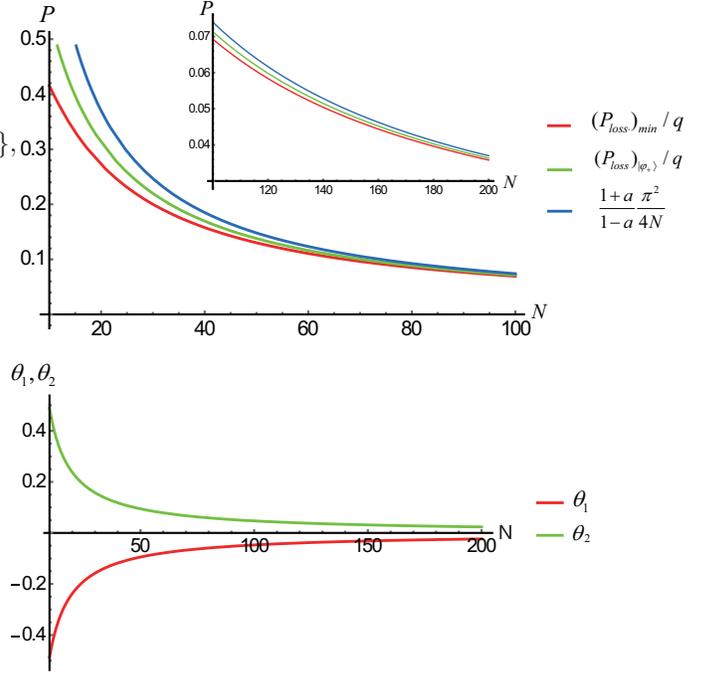}}
\caption{All the graphs are plotted at $a=0.5$. (a) The asymptotic behaviors of $(P_{\rm loss})_{|\varphi_+\rangle}/q, (P_{\rm loss})_{min}/q$ as $N\rightarrow\infty$. The green line $(P_{\rm loss})_{|\varphi_+\rangle}/q$ is always above the red line $(P_{\rm loss})_{min}/q$. The main term of the asymptotic expressions in Eqs.~\eqref{Ploss_familymin_asym},~\eqref{Plossmin_asym}, i.e., $\frac{1+a}{1-a}\frac{\pi^2}{4N}$, is also shown in the plot with blue line. Inset: N ranges from 100 to 200. All the three terms go to zero asymptotically when $N\rightarrow\infty$.
(b) The asymptotic behavior of $\theta_1$ and $\theta_2$. We use negative sign for $\theta_1$ because it locates at the negative X axis side, as showed in Fig.~\ref{Blochs_position}.}\label{asymptotic}
\end{figure}

When $N\rightarrow\infty$, $\theta_1$ and $\theta_2$, relating to the initial input states $|\varphi_0\rangle$, $|\varphi_+\rangle$, both go to zero (see Fig.~\ref{asymptotic}). And the unitary $U=e^{-i\frac{\sigma_y}{2}\theta}$ of changing basis goes to identity. Hence, the corresponding vectors $U^\dag|\varphi_0\rangle$ which reaches the $(P_{\rm loss})_{min}$ and $U^\dag|\varphi_+\rangle$ which reaches the minimum of $P_{\rm loss}$ but keeping $P_{\rm error}=0$ in the old basis , go to the same vector $(1,0)^T$, i.e., $|1\rangle$ in our system. That is to say, as $N\rightarrow\infty$, we can use $|1\rangle$ to realize $P_{\rm loss}=P_{\rm error}=0$ asymptotically, perfect detecting the object without photon loss even if the object is a semitransparent one.

\section{conclusion and outlook}\label{sec6}
In conclusion, with the help of quantum channel theory, we build the general model of quantum-Zeno-like IFM, where the object to be detected is semitransparent and the the number of interrogation cycle is finite. Two important probabilities named $P_{\rm loss}$ and $P_{\rm error}$ are proposed to describe the photon loss rate and the error of discrimination in the IFM process. In order to find the minimums of the $P_{\rm loss}$ and $P_{\rm error}$ and the corresponding initial photon input states to reach them, we simplify the iteration of the quantum channels to transfer matrices operating on pure state. With this compact simplification, the minimum properties of $P_{\rm loss}$ and $P_{\rm error}$ can be systemically studied. In addition, it shows that the entangled photon input state can not enhance the performance of IFM, considering $P_{\rm loss}$ and $P_{\rm error}$ respectively.

Furthermore, we should point out that $P_f=P_{\rm loss}+P_{\rm error}$ is a more significant criteria to evaluate IFM process, because it describes all the possibilities where the IFM process is a failure, including both the photon loss (object damage) and the error making in the discrimination process. However, even for this criteria $P_f$, we can also come to the conclusion that the quantum correlation (entanglement in our problem) can not benefit IFM process(see Appendix.~\ref{entanglement_enhance}). In addition, the asymptotic behaviors are also studied and we find that the state $\ket{1}$ in our system can perfectly detect the generic semitransparent object without any object damage when $N\rightarrow\infty$.

Finally, our paper provides principal theoretic support for the experimental research and practical realization of IFM, like electron microscopy of biological substances or detection of fragile nano-materials. Moreover, our theoretical approaches, borrowing from quantum information theory, such as quantum channel theory, quantum state discrimination etc, can be applied to other quantum facilitating scenarios and the analysis of whether quantum correlation can benefit these specific processes or not is intriguing.

\emph{Acknowledgement.}---We acknowledge C.~R.~Yang and X.~Yuan for the insightful discussions. This work was supported by the National Natural Science Foundation of China Grants No. 11405093.
\appendix
\section{derivation of the quantum channel $\mathcal{E}_I$ of the generic semitransparent object}\label{channel_derivation}
In the main part, the generic semitransparent object is composed of a beam splitter and a photon detector. The quantum channel $\mathcal{E}_I$ will be built by combing the operation of the beam splitter and the photon detector in the following.

let us give the channel description of the photon detector first.
The photon detector is modeled by a two-level atom with the ground state $\ket{g}$ and the exited state $\ket{e}$ respectively. And the atom staying at $\ket{g}$ interacts with the incident photon mode, denoted by $\ket{p}$. The atom can absorb the photon, transform it to the vacuum state $\ket{v}$ and become to the exited state $\ket{e}$ under the unitary $U_{\rm det}$; however, the unitary  $U_{\rm det}$ does not change the state $\ket{v,g}$, that is,
\begin{equation}\label{unitary_detector}
\begin{aligned}
&U_{\rm det}\ket{p,g}=\ket{v,e},\\
&U_{\rm det}\ket{v,g}=\ket{v,g}.
\end{aligned}
\end{equation}
Then the atom should be measured in the $\ket{g}, \ket{e}$ basis and reset to $\ket{g}$. In fact, we does not need to care about the operation of $U_{\rm det}$ on the other two states, say, $\ket{p,e}$ and  $\ket{v,e}$, since the atom always stays at the ground state $\ket{g}$ before the interaction.

Hence, the overall operation on the photon state is:
\begin{equation}\label{detector_operation}
\begin{aligned}
\rho_{\rm out}&=\sum_{i=g,e}\bra{i}U_{\rm det}(\rho_{\rm in}\otimes \ket{g}\bra{g})U_{\rm det}^{\dag} \ket{i},\\
&=\sum_{i=g,e}\bra{i}U_{\rm det}\ket{g}\rho_{\rm in}\bra{g}U_{\rm det}^{\dag} \ket{i},
\end{aligned}
\end{equation}
where $\rho_{\rm in}$ and $\rho_{\rm out}$ are the input and output photon state.
Following the standard quantum channel construction method \cite{nielsen2010quantum}, the quantum operation in Eq.~\eqref{detector_operation} can be written down with the Kraus representation as:
\begin{equation}\label{}
\begin{aligned}
&\rho_{\rm out}=\sum_{i=0,1}K_i \rho_{\rm in} K_i^{\dag},\\
&K_0=\bra{g}U_{det}\ket{g}=\ket{v}\bra{v},\\
&K_1=\bra{e}U_{det}\ket{g}=\ket{v}\bra{p}.\\
\end{aligned}
\end{equation}
where $K_0$, $K_1$ are the corresponding Kraus operators and we obtain the expressions of them using Eq.~\eqref{unitary_detector}.

For the scenario in the main part, there are three photon modes, i.e., $\ket{1},\ket{2},\ket{3}$, except the vacuum one $\ket{v}$, and only $\ket{3}$ can interact with the detector. So the channel should be slightly modified to
\begin{equation}\label{}
\begin{aligned}
&\mathcal{E}_{det}(\cdot)=\sum_{i=0,1}D_i(\cdot)D_i^\dag,\\
&D_0=\ket{1}\bra{1}+\ket{2}\bra{2}+\ket{v}\bra{v},\\
&D_1=\ket{v}\bra{3}.\\
\end{aligned}
\end{equation}
where $D_0$, $D_1$ are the corresponding Kraus operators.

On the other hand, the matrix representation of the unitary for the beam splitter $U_{\rm b}$ in the $\ket{1},\ket{2},\ket{3},\ket{v}$ basis shows
\begin{equation}
  U_{\rm b}=\left(
  \begin{array}{cccc}
    1 & 0 & 0 & 0\\
    0 & a & -\sqrt{1-a^2} & 0\\
    0 & \sqrt{1-a^2}& a & 0\\
    0 & 0 & 0 & 1\\
  \end{array}
\right)
\end{equation}

Combing the two operations of the photon detector $\mathcal{E}_{\rm det}$ and the beam splitter $U_b$, we have the combined channel being
\begin{equation}\label{}
\begin{aligned}
&\mathcal{E}_{\rm com}(\cdot)=\sum_{i=0,1}D_iU_b(\cdot)U_b^\dag D_i^\dag,\\
\end{aligned}
\end{equation}
and the corresponding Kraus operators $C_i=D_iU_b$ show
\begin{equation}\label{channel_123v}
\begin{aligned}
&C_0=\ket{1}\bra{1}+a\ket{2}\bra{2}-\sqrt{1-a^2}\ket{2}\bra{3}+\ket{v}\bra{v},\\
&C_1=\sqrt{1-a^2}\ket{v}\bra{2}+a\ket{v}\bra{3}.\\
\end{aligned}
\end{equation}

In fact, the component $\ket{3}$ is redundant as it is introduced to illustrate the intermediate process between the beam splitter and the photon detector. Remembering that the beam splitter and the photon detector as a whole represent the semitransparent object, thus we can treat them together as a black box and the photon state in IFM equivalently lives in the three dimensional space $\mathcal{H}_{12v}=spanned\{\ket{1},\ket{2},\ket{v}\}$. As a result, without altering the function of the channel that represents the semitransparent object, we can eliminate the terms in the above Kraus operator (Eq.~\eqref{channel_123v}) that relate to the component $\ket{3}$ and get
\begin{equation}\label{channel_12v}
\begin{aligned}
&A_0=\ket{1}\bra{1}+a\ket{2}\bra{2}+\ket{v}\bra{v},\\
&A_1=\sqrt{1-a^2}\ket{v}\bra{2},\\
\end{aligned}
\end{equation}
where we use $A_0$ and $A_1$ to denote the new Kraus operators.

Further more, actually, we can substitute the loss state $\ket{3}$ for the vacuum state $\ket{v}$ in the above Kraus operators to obtain the effective channel in the main part (Eq.~\eqref{channel_123}), since using which label to count the photon loss probability is equivalent here. The physical insight behind this is that the component $\ket{3}$ reflected by the beam splitter should be absorbed totally by the photon detector, i.e., dephased and transformed to the vacuum state $\ket{v}$.

\section{non-increasing of the generalized trace distance under quantum operation}\label{contractive}
Here, we first give the definition of the generalized trace distance as follows.
\begin{definition}
The generalized trace distance for the two quantum state $\rho_1$ and $\rho_2$ shows,
\begin{equation}
D_q(\rho_1,\rho_2)=\|q\rho_1-(1-q)\rho_2\|,
\end{equation}
where $\|\cdots\|$ is the trace norm and $0\leq q\leq 1$ is the corresponding probability factor.
\end{definition}
Note that $D_{1/2}(\rho_1,\rho_2)$ is the original trace distance \cite{nielsen2010quantum}. Then we show the property of the generalized trace distance in the following Theorem.
\begin{theorem}\label{general_trdistance}
suppose $\Lambda(\cdot)$ is a trace preserving quantum operation, then it is contradictive for the generalized trace distance, i.e.,
\begin{equation}
D_q(\rho_1,\rho_2)\geq D_q(\Lambda(\rho_1),\Lambda(\rho_2)).
\end{equation}
\end{theorem}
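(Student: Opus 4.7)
The plan is to reduce the statement to the standard contractivity of the ordinary trace norm under trace preserving maps, which follows from the Jordan-Hahn decomposition plus the triangle inequality. The object $A := q\rho_1 - (1-q)\rho_2$ is Hermitian (though not generally traceless when $q\neq 1/2$), so all the usual tools apply without modification.

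First I would write down the Jordan-Hahn decomposition
\begin{equation}
q\rho_1 - (1-q)\rho_2 \;=\; P - N,
\end{equation}
where $P,N\geq 0$ have mutually orthogonal support. Two immediate consequences will be used. (i) By construction, $\|P-N\|=\mathrm{Tr}(P)+\mathrm{Tr}(N)$, so
\begin{equation}
D_q(\rho_1,\rho_2)=\mathrm{Tr}(P)+\mathrm{Tr}(N).
\end{equation}
(ii) Because $\Lambda$ is positive (in fact CP), $\Lambda(P)$ and $\Lambda(N)$ are both positive operators, hence $\|\Lambda(P)\|=\mathrm{Tr}(\Lambda(P))$ and $\|\Lambda(N)\|=\mathrm{Tr}(\Lambda(N))$.

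Next I would apply $\Lambda$ to the decomposition and invoke the triangle inequality for the trace norm:
\begin{equation}
D_q(\Lambda(\rho_1),\Lambda(\rho_2)) \;=\; \|\Lambda(P)-\Lambda(N)\| \;\leq\; \|\Lambda(P)\|+\|\Lambda(N)\|.
\end{equation}
Now trace preservation enters: $\mathrm{Tr}(\Lambda(P))=\mathrm{Tr}(P)$ and $\mathrm{Tr}(\Lambda(N))=\mathrm{Tr}(N)$. Combining this with observation (ii) above gives $\|\Lambda(P)\|+\|\Lambda(N)\|=\mathrm{Tr}(P)+\mathrm{Tr}(N)=D_q(\rho_1,\rho_2)$, which finishes the proof.

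There is really no serious obstacle here; the only point that deserves attention is recognizing that the absence of tracelessness in $q\rho_1-(1-q)\rho_2$ is immaterial — the Jordan-Hahn argument splits any Hermitian operator into orthogonal positive and negative parts, so the standard proof for $D_{1/2}$ carries over verbatim. One could alternatively give a variational proof via $\|X\|=\max_{-I\leq M\leq I}\mathrm{Tr}(MX)$, but the Jordan-Hahn route above is cleaner and matches the tools the paper has already invoked. The result is then used in the main text exactly with $\Lambda=\mathrm{Tr}_B$ to justify Eq.~\eqref{Perror_bi_single}.
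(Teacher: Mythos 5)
Your proof is correct, and it is built on the same skeleton as the paper's: the Jordan--Hahn decomposition $M=q\rho_1-(1-q)\rho_2=P-N$ into positive parts with orthogonal supports, followed by trace preservation to convert $\mathrm{Tr}\,\Lambda(P)+\mathrm{Tr}\,\Lambda(N)$ back into $\mathrm{Tr}(P)+\mathrm{Tr}(N)=D_q(\rho_1,\rho_2)$. The one genuine difference is how the upper bound on $\|\Lambda(M)\|$ is obtained. The paper first proves an auxiliary variational lemma, $\|M\|=\max_{P_0+P_1=I}\mathrm{Tr}[(P_1-P_0)M]$ over projector pairs, and then estimates $\mathrm{Tr}[(P_1'-P_0')(\Lambda(P)-\Lambda(N))]\leq \mathrm{Tr}\,\Lambda(P)+\mathrm{Tr}\,\Lambda(N)$ at the optimal pair; you instead invoke the triangle inequality $\|\Lambda(P)-\Lambda(N)\|\leq\|\Lambda(P)\|+\|\Lambda(N)\|$ together with the fact that the trace norm of a positive operator equals its trace. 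Your route is slightly shorter since it dispenses with the auxiliary lemma; the paper's variational formula has the side benefit of being the measurement-based characterization that connects directly to the POVM picture behind $P_{\rm error}$. Both arguments use positivity of $\Lambda$ (not merely trace preservation) at exactly one point — yours to get $\|\Lambda(P)\|=\mathrm{Tr}\,\Lambda(P)$, the paper's to bound the trace against the optimal projectors — which is harmless since a quantum operation is completely positive by definition, and your remark that the non-tracelessness of $M$ for $q\neq 1/2$ is immaterial is accurate.
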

To prove Th.~\ref{general_trdistance} conveniently, we show another equivalent definition for the generalized trace distance $D_q(\rho_1,\rho_2)$.
\begin{lemma}\label{alter_def_trnorm}
\begin{equation}
D_q(\rho_1,\rho_2)=\mathrm{Tr}_{max}[(P_1-P_0)M],
\end{equation}
where we use $M=q\rho_1-(1-q)\rho_2$ for simplicity; and the maximization is over all projector pairs $P_0$, $P_1$ that satisfy $P_0+P_1=\mathbb{I}$.
\end{lemma}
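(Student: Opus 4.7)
The plan is to exploit the Jordan decomposition of the Hermitian operator $M=q\rho_1-(1-q)\rho_2$, since the trace norm of a Hermitian operator admits a very clean variational characterization in terms of projectors onto its positive and negative spectral parts. First I would write $M=M_+-M_-$, where $M_\pm\ge 0$ are the positive and negative parts with mutually orthogonal supports; by the definition of the trace norm applied to a Hermitian operator, one has $\|M\|=\mathrm{Tr}(M_+)+\mathrm{Tr}(M_-)$.

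Next I would prove the upper bound. Let $P_0,P_1$ be any pair of orthogonal projectors summing to $\mathbb{I}$. Then
\begin{equation*}
\mathrm{Tr}[(P_1-P_0)M]=\mathrm{Tr}[P_1 M_+]-\mathrm{Tr}[P_1 M_-]-\mathrm{Tr}[P_0 M_+]+\mathrm{Tr}[P_0 M_-].
\end{equation*}
Because $0\preceq P_i\preceq\mathbb{I}$ and $M_\pm\succeq 0$, each quantity $\mathrm{Tr}[P_i M_\pm]$ lies in $[0,\mathrm{Tr}(M_\pm)]$, so the right-hand side is bounded above by $\mathrm{Tr}(M_+)+\mathrm{Tr}(M_-)=\|M\|$. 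This gives ``$\ge$'' (i.e.\ the trace norm dominates the maximization).

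Finally I would exhibit a specific pair of projectors that saturates the bound. Let $\Pi_+$ be the spectral projector onto the strictly positive eigenspace of $M$ and $\Pi_-$ the projector onto its strictly negative eigenspace, and assign the kernel of $M$ arbitrarily, say to $P_0$: take $P_1=\Pi_+$ and $P_0=\mathbb{I}-\Pi_+$. Since $\Pi_+ M_+ = M_+$, $\Pi_+ M_- = 0$, and by orthogonality of supports $(\mathbb{I}-\Pi_+)M_+=0$, $(\mathbb{I}-\Pi_+)M_-=M_-$, one obtains $\mathrm{Tr}[(P_1-P_0)M]=\mathrm{Tr}(M_+)+\mathrm{Tr}(M_-)=\|M\|$, completing the equality.

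I do not expect a genuine obstacle here; the only subtle point is the treatment of the kernel of $M$ (which contributes nothing to either side), so any convention for splitting it between $P_0$ and $P_1$ works. The argument is independent of the specific structure of $M$ beyond its Hermiticity, and therefore applies verbatim to the ``generalized'' case with arbitrary weight $q\in[0,1]$.
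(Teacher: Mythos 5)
Your proof is correct and follows essentially the same route as the paper's: decompose the Hermitian operator $M$ into its positive and negative parts, bound $\mathrm{Tr}[(P_1-P_0)M]$ above by $\mathrm{Tr}(M_+)+\mathrm{Tr}(M_-)=\|M\|$ using $0\preceq P_i\preceq\mathbb{I}$, and saturate with the spectral projectors. Your explicit handling of the kernel of $M$ is a minor refinement the paper glosses over, but the argument is otherwise the same.
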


\begin{proof}
$M$ is a hermit matrix by definition, thus we can use unitary to diagonalize it to $UMU^\dag$, and by separating the eigenvalues to nonnegative and negative parts we can obtain $UMU^\dag=Q'-S'$. As a result, we can represent $M$ as the subtraction of the two nonnegative matrices $M=U^\dag(Q'-S')U=Q-S$, and $||M||=||Q-S||=Tr(Q)+Tr(S)$. Then for any projector pair $P_0$, $P_1$,
\begin{equation}
\begin{aligned}
  \mathrm{Tr}[(P_1-P_0)M]&=Tr[(P_1-P_0)(Q-S)],\\
 &\leq \mathrm{Tr}[P_1Q+P_0S],\\
 &\leq \mathrm{Tr}(Q)+\mathrm{Tr}(S),\\
 &\leq ||M||.
\end{aligned}
\end{equation}
We can choose $P_0$ and $P_1$ just the projectors on the two orthogonal subspace where $Q$ and $S$ lives respectively, then $Tr[(\Pi_1-\Pi_0)M]$ can reach $||M||$ in this way and we finish the proof.
\end{proof}
Then we prove Th.~\ref{general_trdistance} with the help of Lemma.~\ref{alter_def_trnorm}
\begin{proof}
\begin{equation}
\begin{aligned}
\|M\|&=Tr(Q)+Tr(S),\\
 &= Tr[\Lambda(Q)+\Lambda (S)],\\
 &\geq Tr[(P'_1-P'_0)(\Lambda (Q)-\Lambda (S))],\\
 &=Tr[(P'_1-P'_0)\Lambda(M)],\\
 &=\|\Lambda(M)\|,
\end{aligned}
\end{equation}
where $P'_0$ and $P'_1$ are the projector pair used to reach the maixmal value $\|\Lambda(M)\|$, referring to Lemma.~\ref{alter_def_trnorm}.
Then, by substituting $M=q\rho_1-(1-q)\rho_2$, we finish the proof.
\end{proof}

\section{proof of Lemma.~\ref{tr_fidelity_lemma}}\label{app_lemma_tr_fidelity}
Here, we give the proof of Lemma.~\ref{tr_fidelity_lemma} in the main part that says
\begin{equation*}
  \|p|\psi_1\rangle\langle \psi_1| - |\psi_2\rangle \langle \psi_2| \|=\sqrt{(p+1)^2-4p|\langle \psi_1|\psi_2\rangle|^2}.
 \end{equation*}
\begin{proof}
$|\psi_1\rangle\langle \psi_1|$ can be expressed as $\frac{1}{2}(I+\sigma_z)$ in the basis of itself. Since $|\psi_2\rangle\langle \psi_2|$ does not change if we change the global phase of it, we have $|\psi_2\rangle=\cos\frac{\gamma}{2}|\psi_1\rangle+\sin\frac{\gamma}{2}|\psi_3\rangle$ $(0\leq\gamma \leq \pi/2)$, where $|\psi_3\rangle$ is the state orthogonal to $|\psi_1\rangle$. Then $|\psi_2\rangle\langle \psi_2|$ shows $\frac{1}{2}(I+\cos \gamma\sigma_z+\sin\gamma \sigma_x)$. And the trace norm $\|p|\psi_1\rangle\langle \psi_1| - |\psi_2\rangle \langle \psi_2| \|=\frac{1}{2}\|(p-1)I+(p-\cos \gamma)\sigma_z-\sin \gamma\sigma_x\|$. The two eigenvalues of $(p-1)\mathbb{I}+(p-\cos \gamma)\sigma_z-\sin \gamma\sigma_x$ are $(p-1)\pm\sqrt{(p-\cos\gamma)^2+\sin ^2 \gamma}$. Hence $\|p|\psi_1\rangle\langle \psi_1| - |\psi_2\rangle \langle \psi_2| \|=\sqrt{(p-\cos \gamma)^2+\sin ^2 \gamma}=\sqrt{(p+1)^2-4p\cos^2\frac{\gamma}{2}}=\sqrt{(p+1)^2-4p|\langle \psi_1|\psi_2\rangle|^2}$
\end{proof}
\section{The effect of quantum correlation for IFM process considering $P_{\rm error}$ and $P_f$ }\label{entanglement_enhance}
In this appendix, we will give the detailed illustration of the argument in the main part of our article that quantum correlation can not benefit the IFM process considering $P_{\rm error}$ and $P_f$ respectively.

For simplicity, we substitute for the terms in Eq.~\eqref{perror_general} by:
\begin{equation}\label{lambda_12}
\begin{aligned}
\lambda_1&=qTr[C^\dag C\rho_A]+(1-q),\\
\lambda_2&=2\sqrt {q(1-q)}|Tr[D^\dag C\rho_A]|.\\
\end{aligned}
\end{equation}
As a result, Eq.~\eqref{perror_general} becomes to a more concise form,
\begin{equation}\label{Perror_lambda_12}
\begin{aligned}
P_{\rm error}=\frac{1}{2}(\lambda_1-\sqrt{\lambda_1^2-\lambda_2^2}).
\end{aligned}
\end{equation}
It is not difficult to see that $P_{\rm error}$  monotonically decreases with the increasing of $\lambda_1$ due to the first order partial derivative on $\lambda_1$ being
\begin{equation}\label{}
\begin{aligned}
\frac{\partial P_{\rm error}}{\partial \lambda_1}=\frac{1}{2}(1-\frac{\lambda_1} {\sqrt{\lambda_1^2-\lambda_2^2}})
\leq 0.
\end{aligned}
\end{equation}
In the meantime, it's obvious that $P_{\rm error}$ decreases with the decreasing of $\lambda_2$.  Consequently, increasing $\lambda_1$ and decreasing $\lambda_2$ at the same time can minimize $P_{\rm error}$.

Here, with the help of Eqs.~\eqref{CdagC},~\eqref{DdagC} and~\eqref{two_level_state}, we present the expressions for $Tr[C^\dag C\rho_A]$, $Tr[D^\dag C\rho_A]$ in the definitions of $\lambda_1$ and $\lambda_2$ explicitly as
\begin{equation}\label{app_traceeq}
\begin{aligned}
Tr[C^\dag C\rho_A]&=(\frac{1-a}{2})^{2N}[f_1^2(1+k_1^2)+f_2^2+2f_1(f_2r_z-f_1k_1r_x)],\\
Tr[D^\dag C\rho_A]&=(\frac{1-a}{2})^{N}[f_1k_1-f_1r_x+i f_2r_y].
\end{aligned}
\end{equation}
The above equations shows that for a fixed $r_x$, we can always increase $\lambda_1$ and decrease $\lambda_2$ (i.e., decrease $P_{\rm error}$) by changing $r_y=0$ and $r_z=\sqrt{1-r_x^2}$. In other words, the minimum of $P_{\rm error}$ should be reached by pure state $\rho_A=\ket{\varphi_A}\bra{\varphi_A}$ of photon $A$ part. Thus, entangled photon input state can not enhance the performance for $P_{\rm error}$ in all $k_1$ regime, compared with single photon input state.

Moreover, we consider the effect of quantum correlation to IFM process, with a more effective criteria $P_f=P_{\rm loss}+P_{\rm error}$, which describing all the failure probability in IFM process. Owing to Th.~\ref{t1}, $P_f$ also shows the following concave property like $P_{\rm error}$,
\begin{equation}\label{ae5}
\begin{aligned}
P_f\geq p_i P_f^i.
\end{aligned}
\end{equation}
That is to say, the minimum of $P_f$ should be reached by the pure state and the quantum correlation here means entanglement. With the help of Eqs.~\eqref{Ploss_newbasis},~\eqref{perror_general} and~\eqref{lambda_12} , we have $P_f$ being
\begin{equation}\label{Pf_lamda_12}
\begin{aligned}
P_f=1-\frac{1}{2}(\lambda_1+\sqrt{\lambda_1^2-\lambda_2^2}).
\end{aligned}
\end{equation}
$P_f$ also decreases with the increasing of $\lambda_1$ and decreasing of $\lambda_2$. Consequently, just like the aforementioned reason for $P_{\rm error}$, we can also argue that entanglement can not enhance the performance of IFM considering $P_f$.
\section{the derivation of Eq.~\eqref{Plossmin_asym}}\label{Plossmin_asym_App}
Here, we give the derivation of $(P_{\rm loss})_{min}^{N\rightarrow\infty}$ in Eq.~\eqref{Plossmin_asym}, which is the asymptotic behaviour of $Eq.~\eqref{Plossmin_newbasis_semi}$ as $N\rightarrow0$.

First,
utilizing the same approximation technique used in $Eq.~\eqref{approximate_1}$ in the main part, we get
\begin{equation}\label{app_asymptotic}
\begin{aligned}
&(\frac{1-a}{2})^{N}(\Sigma_2-\Sigma_1)= O(a^N)\rightarrow0,\\
&(\frac{1-a}{2})^{N}\Sigma_1\simeq(\frac{1-a}{2})^{N}\Sigma_2=\frac{1}{2}-\frac{1+a}{1-a}\frac{\pi^2}{16N}+O(\frac{1}{N^3}).\\
\end{aligned}
\end{equation}

Then, let us consider the asymptotic behavior of $Eq.~\eqref{Plossmin_newbasis_semi}$, the minimum of $P_{\rm loss}$. By applying  $Eq.~\eqref{sum_k1small},~\eqref{f_k1small}$ and the definitions of $k_1$, $k_2$ (Eq.~\eqref{k1_k2}), we have
\begin{equation}\label{}
\begin{aligned}
&(\frac{1-a}{2})^{2N}(f_1+\sqrt{f_2^2+f_1^2k_1^2})^2,\\
\simeq&(\frac{1-a}{2})^{2N}(f_1+f_2+\frac{f_1^2}{2f_2}k_1^2)^2,\\
=&(\frac{1-a}{2})^{2N}(\frac{\Sigma_1}{\sqrt{1-k_1^2}}+\Sigma_2+\frac{\Sigma_1^2}{2\Sigma_2}
\frac{k_1^2}{1-k_1^2})^2,\\
\simeq&(\frac{1-a}{2})^{2N}[\Sigma_1(1+\frac{k_1^2}{2})+\Sigma_2+\frac{\Sigma_1^2}{2\Sigma_2}
k_1^2(1+k_1^2)]^2,\\
\simeq&(\frac{1-a}{2})^{2N}[(\Sigma_1+\Sigma_2)+\frac{k_1^2}{2}\Sigma_1(1+\frac{\Sigma_1}
{\Sigma_2})]^2,\\
\simeq&[1-\frac{1+a}{1-a}\frac{\pi^2}{8N}+(\frac{1+a}{1-a})^2\frac{\pi^2}{8N^2}+O(\frac{1}{N^3})]^2,\\
\simeq&1-\frac{1+a}{1-a}\frac{\pi^2}{4N}+O(\frac{1}{N^2}).
\end{aligned}
\end{equation}
where in the next-to-last row, we employ the equalities in Eq.~\eqref{app_asymptotic}.
So the asymptotic expression of $Eq.~\eqref{Plossmin_newbasis_semi}$ is
\begin{equation*}
\begin{aligned}
(P_{\rm loss})_{min}^{N\rightarrow\infty}\simeq q[\frac{1+a}{1-a}\frac{\pi^2}{4N}-O(\frac{1}{N^2})].
\end{aligned}
\end{equation*}
just as Eq.~\eqref{Plossmin_asym} describes in the main part.

\bibliographystyle{apsrev4-1}

\bibliography{zyzybib}

\end{document}